\newcommand{\be}{\begin{equation}}
\newcommand{\ee}{\end{equation}}
\newcommand{\bea}{\begin{eqnarray}}
\newcommand{\eea}{\end{eqnarray}}
\numberwithin{equation}{section}
\newcounter{thmcounter}
\numberwithin{thmcounter}{section}
\theoremstyle{definition}
\newtheorem*{acknowledgements}{Acknowledgements}
\newtheorem{definition}[thmcounter]{Definition}
\newtheorem{remark}[thmcounter]{Remark}
\theoremstyle{plain}
\newtheorem{lemma}[thmcounter]{Lemma}
\newtheorem{proposition}[thmcounter]{Proposition}
\newtheorem{theorem}[thmcounter]{Theorem}
\def\1{{\boldsymbol 1}}                     %
\def\cD{{\mathcal D}}                       %
\def\cH{{\mathcal H}}                       %
\def\tr{\mathrm{tr}}                        %
\def\diag{\mathrm{diag}}                    %
\def\ri{{\rm i}}                            %
\def\C{\mathbb{C}}                          %
\def\N{\mathbb{N}}                          %
\def\R{\mathbb{R}}                          %
\def\T{\mathbb{T}}                          %
\def\cF{{\mathcal F}}                       %
\def\reg{\mathrm{reg}}                      %
\def\red{\mathrm{red}}                      %
\def\Ad{{\mathrm{Ad}}}                      %
\def\id{{\mathrm{id}}}                      %
\def\dt {\left.\frac{d}{dt}\right|_{t=0}}   %
\def\fM{\mathfrak{M}}                       %
\def\cN{{\mathcal N}}                       %
\def\GL{{\rm GL}(n,\C)}                     %
\def\gl{{\rm gl}(n,\C)}                     %
\def\cG{{\mathcal G}}                       %
\def\dz {\left.\frac{d}{dz}\right|_{z=0}}   %
\def\Holm{\mathrm{Hol}(\fM)}                %
\def\Hol{\mathrm{Hol}}                      %
\def\cR{{\mathcal R}}                       %
\def\ad{\mathrm{ad}}                       %
\def\I{{\mathbb{I}}}                        %
\def\half{\frac{1}{2}}                      %
\def\sgn{\mathrm{sgn}}
\begin{document}

\begin{center}
{\Large\bf
Bi-Hamiltonian structure of spin Sutherland  models: the holomorphic case}
\end{center}

\medskip
\begin{center}
L.~Feh\'er${}^{a,b}$
\\

\bigskip
${}^a$Department of Theoretical Physics, University of Szeged\\
Tisza Lajos krt 84-86, H-6720 Szeged, Hungary\\
e-mail: lfeher@physx.u-szeged.hu

\medskip
${}^b$Department of Theoretical Physics, WIGNER RCP, RMKI\\
H-1525 Budapest, P.O.B.~49, Hungary\\
\end{center}

\medskip
\begin{abstract}
We construct a bi-Hamiltonian structure for the holomorphic spin Sutherland hierarchy
based on collective spin variables.  The construction relies
on Poisson reduction of a bi-Hamiltonian structure on the holomorphic cotangent bundle
of $\GL$, which itself arises from the canonical symplectic structure
and the Poisson structure of the Heisenberg double of the
standard  $\GL$ Poisson--Lie group.
The previously obtained bi-Hamiltonian structures of the hyperbolic and trigonometric real forms
are recovered on real slices of the holomorphic spin Sutherland model.
\end{abstract}

{\linespread{0.8}\tableofcontents}

\newpage
\section{Introduction}
\label{sec:I}

The theory of integrable systems is an interesting field of mathematics motivated by influential examples
of exactly solvable models of theoretical physics. For reviews,  see e.g. \cite{A,BBT,RBanff,SurB}.
There exist several approaches to integrability. One of the most popular ones in connection with
classical integrable systems is the bi-Hamiltonian method, which
originates from the work of Magri \cite{M} on the KdV equation, and plays an important role in
generalizations of this infinite dimensional bi-Hamiltonian system \cite{DeKV}.
As can be seen in the reviews,
among finite dimensional integrable systems the central position is occupied by Toda models
and the models that carry the names
of Calogero, Moser, Sutherland, Ruijsenaars and Schneider.
The Toda  models have a relatively well developed bi-Hamiltonian description \cite{SurB}.
The Calogero--Moser type models and their generalizations are much less explored from this point of view,
except for the rational Calogero--Moser model  \cite{AAJ,Bar,FaMe}.
In our recent work \cite{FeNon,FeLMP}  we made a step towards improving this situation by providing a bi-Hamiltonian
interpretation for a family of spin extended  hyperbolic and trigonometric Sutherland models.
In these references we investigated real-analytic Hamiltonian systems, and here wish to extend the pertinent results
to the corresponding complex holomorphic case.

Specifically, the aim of this paper is to  derive  a bi-Hamiltonian description
for the hierarchy of holomorphic evolution equations of the form
\be
\dot{Q}= (L^k)_0 Q,
\qquad
\dot{L} = [ \cR(Q)(  L^k), L], \qquad \forall k\in \N,
\label{I1}\ee
where $Q$ is an invertible complex diagonal matrix of size $n\times n$,
$L$ is an arbitrary $n\times n$ complex matrix,
and the subscript $0$ means diagonal part.
The eigenvalues $Q_j$ of $Q$ are required to be distinct, ensuring that the formula
\be
\cR(Q) := \frac{1}{2} (\Ad_Q + \id)(\Ad_Q - \id)^{-1}, \quad \hbox{with}\quad \Ad_Q(X):= Q X Q^{-1},
\label{I2}\ee
gives a well-defined linear operator on the off-diagonal subspace of $\gl$.
By definition, $\cR(Q)\in \mathrm{End}(\gl)$ vanishes on the diagonal matrices,
and one can recognize it as the basic dynamical $r$-matrix \cite{BDF,EV}.
Like in the real case \cite{FeNon}, it follows from the classical dynamical Yang--Baxter
equation satisfied by $\cR(Q)$  that
the evolutional derivations (\ref{I1}) pairwise commute \emph{if} they act on
such `observables' $f(Q,L)$  that are invariant with respect to conjugations of $L$
by invertible diagonal matrices.

The  system (\ref{I1}) has a well known interpretation as a holomorphic Hamiltonian system \cite{LX}.
This arises from  the parametrization
\be
L= p + (\cR(Q) + \frac{1}{2}\id)(\phi),
\label{I5}\ee
where $p$ is an arbitrary diagonal and $\phi$ is an arbitrary off-diagonal matrix.
The diagonal entries $p_j$ of $p$ and $q_j$ in $Q_j = e^{q_j}$ form canonically
conjugate pairs. The vanishing of the diagonal part of $\phi$ represents a constraint
on the linear Poisson space $\gl$,  and this is responsible for the gauge transformations acting
on $L$ as conjugations by diagonal matrices.
The $k=1$ member of the hierarchy (\ref{I1}) is governed by the
 standard spin Sutherland Hamiltonian
\be
H_{\mathrm{Suth}}(Q,p,\phi)= \frac{1}{2} \tr \left(L(Q,p,\phi)^2\right)   =
\frac{1}{2} \sum_{i=1}^n p_i^2 + \frac{1}{8} \sum_{k\neq l}
\frac{ \phi_{kl} \phi_{lk}}{\sinh^2\frac{q_k - q_l}{2}}.
\label{I7}\ee
For this reason, we may refer to  \eqref{I1} as the holomorphic spin Sutherland hierarchy.

It is also known (see e.g. \cite{Res1}) that the holomorphic spin Sutherland hierarchy is a reduction
of a natural integrable system on the
cotangent bundle $\fM:=T^* \GL$ equipped with its canonical symplectic form.
Before reduction, the elements of $\fM$ can be represented by pairs $(g,L)$, where $g$ belongs
to the configuration space and $(g,L) \mapsto L$  is the moment map for left-translations.
The Hamiltonians $\tr(L^k)$  generate an integrable system on $\fM$, which reduces to the
spin Sutherland system by keeping only the observables that are invariant under
simultaneous conjugations of $g$ and $L$ by arbitrary elements of $\GL$.
This procedure is called Poisson reduction.
We shall demonstrate that \emph{the unreduced integrable system on $\fM$
possesses a bi-Hamiltonian structure that descends to a bi-Hamiltonian structure of the spin
Sutherland hierarchy via the Poisson reduction}.

A holomorphic (or even a continuous) function on $\fM$ that is invariant
under the $\GL$ action \eqref{act1}
can be recovered
from its restriction to $\fM_0^\reg$, the subset of $\fM$ consisting of the pairs $(Q,L)$ with diagonal
and regular $Q\in \GL$.
Moreover, the restricted function
inherits invariance with respect to the normalizer of the diagonal subgroup $G_0< \GL$, which includes $G_0$.
This explains the gauge symmetry of the hierarchy \eqref{I1}, and lends justification
to the restriction on the eigenvalues of $Q$.

The bi-Hamiltonian structure on $\fM$ involves in addition to the canonical Poisson bracket
associated with the universal cotangent bundle symplectic form
another one that we construct from Semenov-Tian-Shansky's Poisson bracket of the Heisenberg
double of $\GL$ endowed with its standard Poisson--Lie group structure \cite{STS}.
Surprisingly, we could not find it in the literature that
the canonical symplectic structure of the cotangent bundle $\fM$ can be complemented to a
bi-Hamiltonian structure in this manner. So this appears to be a novel result, which is given
by Theorems \ref{Th:2.1}, \ref{Th:2.2+} and Proposition \ref{Prop:2.3} in Section \ref{sec:E}. The actual derivation
of the second Poisson bracket \eqref{E9} is relegated to an appendix.
The heart of the paper is Section \ref{sec:F}, where we derive the bi-Hamiltonian structure
of the system \eqref{I1} by Poisson reduction. The main results are encapsulated by
Theorem \ref{Th:3.4} and Proposition \ref{Prop:3.6}.
The first reduced Poisson bracket \eqref{red1} is associated with the spin Sutherland
interpretation by means of the parametrization \eqref{I5}.
The formula of  the second reduced Poisson bracket is given by equation \eqref{red2}.
After deriving the holomorphic bi-Hamiltonian structure in Section \ref{sec:F},
we shall explain in Section \ref{sec:R} that it  allows us to recover the bi-Hamiltonian structures of the
hyperbolic and trigonometric real forms derived earlier by different means \cite{FeNon,FeLMP}.
In the final section, we  summarize
the main results once more, and highlight a few open problems.

\section{Bi-Hamiltonian hierarchy on the cotangent bundle}
\label{sec:E}

Let us denote  $G:= \GL$ and equip its Lie algebra $\cG:=\gl$ with the trace form
\be
\langle X, Y \rangle := \tr(XY),
\quad \forall X, Y \in \cG.
\label{E1}\ee
This is a non-degenerate, symmetric bilinear form that enjoys the invariance property
\be
\langle X, Y \rangle = \langle \eta X \eta^{-1}, \eta Y \eta^{-1}\rangle, \qquad \forall \eta \in G,\, X,Y\in \cG.
\label{trinv}\ee
Any $X\in \cG$ admits the unique decomposition
\be
X = X_> + X_0 + X_<
\label{E2}\ee
into strictly upper triangular part $X_>$, diagonal part $X_0$, and strictly lower triangular part $X_<$.
Thus $\cG$ is the vector space direct sum of the corresponding subalgebras
\be
\cG = \cG_> + \cG_0 + \cG_<.
\label{RA}\ee
We shall use
the standard solution of the modified classical Yang--Baxter equation
on $\cG$,  $r\in \mathrm{End}(\cG)$ given by
\be
r(X):=   \frac{1}{2} (X_> - X_<),
\label{E3}\ee
and define also
\be
r_\pm := r \pm \frac{1}{2} \mathrm{id}.
\label{rpm}\ee

Our aim is to present two holomorphic Poisson structures on  the complex manifold
\be
\fM:= G \times \cG = \{ (g,L)\mid g\in G, \, L\in \cG\}.
\label{E4}\ee
Denote $\Holm$ the commutative algebra of holomorphic functions on $\fM$.
For any$F \in \Holm$, introduce the $\cG$-valued derivatives $\nabla_1 F$, $\nabla_1'F$  and $d_2 F$ by
the defining relations
\be
\langle \nabla_1 F(g,L), X\rangle = \dz F(e^{zX} g, L),
\quad
\langle \nabla_1' F(g,L), X\rangle = \dz F(ge^{zX}, L)
\label{E5}\ee
and
\be
\langle d_2 F(g,L), X \rangle = \dz F(g, L + z X),
\label{RB}\ee
where $z$ is a complex variable and $X\in \cG$ is arbitrary.
In addition, it will be convenient to define the $\cG$-valued
functions $\nabla_2 F$ and $\nabla_2' F$ by
\be
\nabla_2 F(g,L) := L d_2 F(g,L),
\qquad
\nabla_2' F(g,L) := (d_2F(g,L)) L.
\label{E7}\ee
Note that
\be
\nabla_1' F(g,L) =  g^{-1} (\nabla_1 F(g,L)) g,
\label{E6}\ee
and a similar relations holds between $\nabla_2 F$ and $\nabla_2' F$ whenever $L$ is invertible.

\begin{theorem} \label{Th:2.1}
For holomorphic functions $F, H\in\Holm$, the following formulae define two
Poisson brackets:
\be
\{ F,H\}_1(g,L) =   \langle \nabla_1 F, d_2 H\rangle - \langle \nabla_1 H, d_2 F \rangle + \langle L, [d_2 F, d_2 H]\rangle,
\label{E8}\ee
and
\bea
\{ F, H\}_2(g,L)  &=&
 \langle r \nabla_1 F, \nabla_1 H \rangle - \langle r \nabla'_1 F, \nabla'_1 H \rangle
\label{E9} \\
&&
 +\langle \nabla_2 F - \nabla_2' F, r_+\nabla_2' H  - r_- \nabla_2 H
   \rangle \nonumber \\
 &&
 +\langle \nabla_1 F,  r_+ \nabla_2' H  - r_-\nabla_2 H  \rangle
- \langle \nabla_1 H,  r_+ \nabla_2' F  - r_- \nabla_2 F \rangle,
\nonumber\eea
where the derivatives are evaluated at $(g,L)$, and we put $rX$ for $r(X)$.
\end{theorem}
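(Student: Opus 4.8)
The plan is to establish the three defining properties of a Poisson bracket---antisymmetry, the Leibniz rule, and the Jacobi identity---for each of \eqref{E8} and \eqref{E9}, handling the two brackets by different routes. Bilinearity over $\C$ is evident in both cases. Antisymmetry of $\{\cdot,\cdot\}_1$ is manifest; for $\{\cdot,\cdot\}_2$ it requires a short check that combines two ingredients: first, $r$ is skew with respect to the trace form \eqref{E1}, so that $r_\pm^{*}=-r_\mp$, which takes care of the $\nabla_1$-terms and the mixed terms upon relabelling; second, the cyclicity of the trace yields $\langle\nabla_2 F,\nabla_2 H\rangle=\langle\nabla_2' F,\nabla_2' H\rangle$ (both equal $\tr(L\,d_2F\,L\,d_2H)$ by \eqref{E7}), which is exactly what is needed to cancel the residual contributions coming from $r_+X-r_-X=X$ in the middle line of \eqref{E9}. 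The Leibniz rule holds for both brackets for a structural reason: every term is a bilinear pairing of a first-order derivative of $F$ with a first-order derivative of $H$, and each of the operators $\nabla_1$, $\nabla_1'$, $d_2$, $\nabla_2=L\,d_2$ and $\nabla_2'=(d_2)\,L$ is a derivation (the matrix factors $L$ in \eqref{E7} are harmless, being merely matrix-valued functions). Hence each bracket is a biderivation, and only the Jacobi identity remains.

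For $\{\cdot,\cdot\}_1$ I would identify \eqref{E8} with the canonical Poisson bracket of the cotangent bundle $T^*G$. Trivializing $T^*G\cong G\times\cG^{*}$ by left translations and identifying $\cG^{*}\cong\cG$ through the trace form, the base direction is generated by $\nabla_1 F$ (the left-translation derivative \eqref{E5}), the fibre direction by $d_2 F$ \eqref{RB}, and $L$ plays the role of the moment map for left translations. In these coordinates the canonical symplectic bracket takes, up to the standard sign conventions, exactly the form \eqref{E8}: the first two terms are the natural pairing of base and fibre derivatives, while $\langle L,[d_2F,d_2H]\rangle$ is the linear (Lie--Poisson) term carried by the fibre. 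Since the canonical symplectic form is closed, $\{\cdot,\cdot\}_1$ automatically satisfies the Jacobi identity. Alternatively, a direct check is painless: the base--fibre contributions cancel in the cyclic sum and the Jacobi identity collapses to the Jacobi identity of the Lie bracket on $\cG$ entering the last term of \eqref{E8}.

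For $\{\cdot,\cdot\}_2$ the clean strategy is to realize \eqref{E9} as the transport of a known Poisson structure rather than to verify Jacobi by brute force. Concretely, I would exhibit a holomorphic diffeomorphism from $\fM$ (or a dense open subset thereof) onto an open piece of the Heisenberg double $\mathfrak{D}$ of the standard Poisson--Lie group $G$, under which the Semenov-Tian-Shansky bracket of $\mathfrak{D}$ pushes forward to \eqref{E9}. Under this identification the pair $\nabla_1 F,\nabla_1' F$ encodes the left and right logarithmic derivatives along the group factor, while $\nabla_2 F,\nabla_2' F$ encode the corresponding derivatives along the dual factor, and the operators $r,r_\pm$ built in \eqref{E3}--\eqref{rpm} implement the Poisson--Lie data. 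Because the Heisenberg double is a genuine (indeed symplectic, hence Poisson) manifold by Semenov-Tian-Shansky's theorem \cite{STS}, and since the Jacobi identity is invariant under diffeomorphisms, \eqref{E9} inherits the Jacobi identity as soon as the identification is in place. The explicit matching---and thereby the derivation of the formula \eqref{E9} itself---is what I would carry out in the appendix, using the relation \eqref{E6} between $\nabla_1'$ and $\nabla_1$, together with its analogue for $\nabla_2',\nabla_2$, to pass between the left and right trivializations.

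The main obstacle is precisely this last matching step. The difficulty is bookkeeping: one must pin down the left-versus-right derivative conventions on both factors of $\mathfrak{D}$, keep track of the dressing by $L$ in $\nabla_2=L\,d_2$, and repeatedly invoke the modified classical Yang--Baxter equation satisfied by $r$ in order to collapse the numerous $r$-dependent cross terms into the compact shape of \eqref{E9}. Should one instead prefer a self-contained proof bypassing $\mathfrak{D}$, the very same modified Yang--Baxter equation is what drives the cancellations in the Schouten-bracket computation of the Poisson bivector underlying \eqref{E9}; that route is elementary but considerably longer, which is why reduction to the Heisenberg double is the preferable plan.
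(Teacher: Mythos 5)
Your proposal takes essentially the same route as the paper: the first bracket is recognized as the canonical Poisson bracket of the holomorphic cotangent bundle $T^*G$ in its group-translation trivialization (the paper uses right translations, so that $L$ is the moment map for left translations, consistent with your reading), and the Jacobi identity for the second bracket is obtained by transporting Semenov-Tian-Shansky's Heisenberg double bracket, with the explicit matching deferred to an appendix exactly as in the paper. Your added explicit checks of antisymmetry (skewness of $r$ together with $\langle\nabla_2 F,\nabla_2 H\rangle=\langle\nabla_2' F,\nabla_2' H\rangle$ from cyclicity of the trace) and of the Leibniz rule are correct and simply spell out what the paper leaves to the reader.
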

\begin{proof}
The first bracket is easily seen to be the Poisson bracket associated with the
canonical symplectic form of the holomorphic cotangent bundle of $G$, which is identified with
$G \times \cG$ using right-translations and the trace form on $\cG$.
The antisymmetry and the Jacobi identity of the second bracket
can be verified by direct calculation.
More conceptually, they follow from the fact that locally, in a neighbourhood of $(\1_n,\1_n)\in G\times  \cG$,
the second bracket can be transformed into Semenov-Tian-Shansky's \cite{STS}
Poisson bracket on the Heisenberg double of the standard Poisson--Lie group
$G$. This is explained in the appendix.
\end{proof}

Let  us display the explicit formula of the Poisson brackets of the evaluation functions given by
the matrix elements $g_{ij}$ and the linear functions $L_a := \langle T_a, L \rangle$ associated with an
arbitrary basis $T_a$ of $\cG$, whose dual basis is $T^a$, $\langle T^b, T_a\rangle = \delta_a^b$.
One may use the standard basis of elementary matrices, $e_{ij}$ defined by $(e_{ij})_{kl} = \delta_{ik} \delta_{jl}$, but we find
it convenient to keep a  general basis.
We obtain directly from the definitions
\be
\nabla g_{ij} = \sum_a T^a (T_a g)_{ij}=  g  e_{ji},
\quad
\nabla' g_{ij} = \sum_a (gT_a)_{ij} T^a = e_{ji} g,
\quad
d L_a = T_a.
\label{ref1}\ee
These give the first Poisson bracket immediately
\be
\{g_{ij}, g_{kl}\}_1 = 0,
\quad
\{ g_{ij}, L_a\}_1 = (T_a g)_{ij},
\quad
\{ L_a, L_b\}_1 = \langle [T_a, T_b], L \rangle.
\label{ref2}\ee
Then elementary calculations lead to the following formulae of the second Poisson bracket,
\be
\{ g_{ij}, g_{kl}\}_2 = \frac{1}{2} \left[ \sgn(i-k) - \sgn(l-j) \right] g_{kj} g_{il},
\label{ref3}\ee
where $\sgn$ is the usual sign function, and
\be
\{ g_{ij}, L_a\}_2 =\Bigl(\bigl( r [T_a, L] + \frac{1}{2} (L T_a + T_a L) \bigr) g\Bigr)_{ij},
\label{ref4}\ee
\be
\{L_a, L_b\}_2 = \langle [L,T_a], r[T_b,L] + \frac{1}{2} (T_b L + L T_b) \rangle.
\label{ref5}\ee
By using the standard basis and evaluating the matrix multiplications,  one may also
spell out the last two equations as

\bea
&& \{ g_{ij},L_{kl}\}= \frac{1}{2}   (\delta_{ik}+\delta_{il}) g_{ij} L_{kl} +
  \delta_{(i>k)} g_{kj} L_{il} +  \delta_{il} \sum_{r>i} L_{kr}g_{rj}\,, \label{R6} \\
&&\{ L_{ij},L_{kl} \}= \frac{1}{2} [\sgn(i-k)+ \sgn(l-j)] L_{il}L_{kj} \nonumber \\
&& \qquad \qquad  \quad \ + \frac{1}{2} (\delta_{il}-\delta_{jk})L_{ij}L_{kl} +
\delta_{il} \sum_{r>i} L_{kr} L_{rj} -   \delta_{jk} \sum_{r> k} L_{ir} L_{rl}\,, \label{R7}
\eea
where  $\delta_{(i>k)} := 1$ if $i>k$ and is zero otherwise.

Let us recall that two Poisson brackets on the same manifold are called
compatible if their arbitrary linear combination is also a Poisson bracket \cite{M}.
Compatible Poisson brackets often arise by taking the Lie derivative
of a given Poisson bracket along a suitable vector field.
If $W$ is a vector field and $\{\ ,\ \}$ is Poisson bracket, then the
Lie derivative bracket is given by
\be
\{ F,H\}^W = W[\{F,H\}] - \{ W[F], H\} - \{ F, W[H] \},
\ee
where $W[F]$ denotes the derivative of the function $F$ along $W$.
This bracket automatically satisfies all the standard properties of a Poisson bracket,
except the Jacobi identity. However, if the Jacobi identity holds for $\{\ ,\ \}^W$, then
$\{\ ,\ \}^W$ and $\{\ ,\ \}$ are  compatible Poisson brackets \cite{FMP,Sm}.

\begin{theorem} \label{Th:2.2+}
The first Poisson bracket of Theorem \ref{Th:2.1} is the Lie derivative of the second Poisson bracket
along the holomorphic vector field, $W$, on $\fM$ whose integral curve through the initial value $(g,L)$ is
\be
\phi_z(g,L) = ( g, L + z \1_n),
\qquad z\in \C,
\label{E10}\ee
where $\1_n$ is the unit matrix. Consequently, the two Poisson brackets are compatible.
\end{theorem}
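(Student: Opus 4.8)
The plan is to prove the stated identity of Poisson brackets directly, after which compatibility will follow from the criterion already recalled in the text. The first step is to make $W$ explicit: since its flow is $\phi_z(g,L)=(g,L+z\1_n)$, the field $W$ is the constant one pointing in the $L$-direction along $\1_n$, and the defining relation \eqref{RB} gives
\be
W[F](g,L)=\dz F(g,L+z\1_n)=\langle d_2 F(g,L),\1_n\rangle=\tr\bigl(d_2 F(g,L)\bigr).
\ee
The goal is then the pointwise identity $W[\{F,H\}_2]-\{W[F],H\}_2-\{F,W[H]\}_2=\{F,H\}_1$ for all $F,H\in\Holm$.

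I would establish this by differentiating formula \eqref{E9} along the flow. Writing $L_z:=L+z\1_n$ and $\tilde F:=F\circ\phi_z$, the key observation is that the $g$-derivatives $\nabla_1,\nabla_1'$ and the $L$-derivative $d_2$ simply transport to the pullback, $\nabla_1 F(g,L_z)=\nabla_1\tilde F(g,L)$ and $d_2 F(g,L_z)=d_2\tilde F(g,L)$ (and likewise for $\nabla_1'$), whereas the derivatives $\nabla_2,\nabla_2'$ acquire an \emph{explicit} factor of $L_z$, namely $\nabla_2 F(g,L_z)=L_z\,d_2\tilde F(g,L)$ and $\nabla_2' F(g,L_z)=d_2\tilde F(g,L)\,L_z$. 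Consequently $\dz\{F,H\}_2(g,L_z)$ splits into two groups: the terms in which $\partial_z$ acts through the $z$-dependence of $\tilde F,\tilde H$, and those in which it strikes an explicit $L_z$.

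Using bilinearity of \eqref{E9} together with $\dz\tilde F=W[F]$ and the commutation of $\nabla_1,\nabla_1',d_2$ with $\partial_z$ (so that $\dz\nabla_1\tilde F=\nabla_1(W[F])$, etc.), the first group reassembles into exactly $\{W[F],H\}_2+\{F,W[H]\}_2$, which is precisely what the Lie-derivative bracket subtracts. The second group is the heart of the matter. In the commutator $\nabla_2F-\nabla_2'F=[L_z,d_2\tilde F]$ the explicit derivative is $[\1_n,d_2F]=0$, so that factor contributes nothing; but in each factor of the shape $r_+\nabla_2' (\cdot)-r_-\nabla_2(\cdot)$ the explicit derivative yields $(r_+-r_-)d_2(\cdot)$, and here the defining relation $r_+-r_-=\id$ from \eqref{rpm} is decisive, collapsing it to $d_2(\cdot)$. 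Collecting the surviving explicit-$z$ contributions of the three $r_\pm$-terms of \eqref{E9}, and using the invariance $\langle[L,d_2F],d_2H\rangle=\langle L,[d_2F,d_2H]\rangle$, gives
\be
\langle L,[d_2F,d_2H]\rangle+\langle\nabla_1F,d_2H\rangle-\langle\nabla_1H,d_2F\rangle=\{F,H\}_1,
\ee
which is the assertion.

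The main obstacle is purely bookkeeping: keeping the two sources of $z$-dependence cleanly separated, and verifying that the explicit-$L_z$ contributions of the second and third lines of \eqref{E9} recombine, through $r_+-r_-=\id$, into the three terms of \eqref{E8} with no stray residue. As an independent safeguard I would recompute on the generators $g_{ij}$ and $L_a$, where $W[g_{ij}]=0$ and $W[L_a]=\tr(T_a)$ is constant; then $\{g_{ij},g_{kl}\}^W$, $\{g_{ij},L_a\}^W$ and $\{L_a,L_b\}^W$ are obtained by differentiating the explicit formulae \eqref{ref3}–\eqref{ref5} in the $\1_n$-direction, and these reproduce \eqref{ref2}, which settles the identity since both brackets are bi-derivations. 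Finally, compatibility is immediate: the Lie-derivative bracket equals $\{\,,\,\}_1$, which is a genuine Poisson bracket by Theorem \ref{Th:2.1} and hence satisfies the Jacobi identity, so by the criterion of \cite{FMP,Sm} recalled above, $\{\,,\,\}_1$ and $\{\,,\,\}_2$ are compatible.
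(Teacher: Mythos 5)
Your argument is correct, and your \emph{primary} route differs from the paper's. The paper proves the identity $\{F,H\}_2^W=\{F,H\}_1$ only on the coordinate functions $g_{ij}$ and $L_a$, using that both sides are bi-derivations together with the explicit formulae \eqref{ref3}--\eqref{ref5} and the observations $W[g]=0$, $W[L]=\1_n$; this is exactly what you relegate to a ``safeguard'' at the end. Your main computation instead differentiates the abstract formula \eqref{E9} along the flow for arbitrary $F,H$, splitting $\dz\{F,H\}_2(g,L+z\1_n)$ into the implicit part (which, by bilinearity and the commutation of $\nabla_1,\nabla_1',d_2$ with $\partial_z$, reassembles into $\{W[F],H\}_2+\{F,W[H]\}_2$) and the explicit-$L_z$ part. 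I checked the latter: the first line of \eqref{E9} contributes nothing, the factor $\nabla_2F-\nabla_2'F=[L_z,d_2F]$ has vanishing explicit derivative, and each factor $r_+\nabla_2'(\cdot)-r_-\nabla_2(\cdot)$ contributes $(r_+-r_-)d_2(\cdot)=d_2(\cdot)$, so the surviving terms are $\langle[L,d_2F],d_2H\rangle+\langle\nabla_1F,d_2H\rangle-\langle\nabla_1H,d_2F\rangle=\{F,H\}_1$ as claimed. What your approach buys is independence from the explicit coordinate formulae \eqref{ref3}--\eqref{ref5} and a transparent identification of where each term of \eqref{E8} originates (the roles of $r_+-r_-=\id$ and of the vanishing of $[\1_n,d_2F]$ are made visible); what it costs is the chain-rule bookkeeping, which the paper's coordinate check avoids entirely. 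The compatibility conclusion via the criterion of \cite{FMP,Sm} is handled identically in both.
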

\begin{proof}
By the general result quoted above  \cite{FMP,Sm}, it is enough to check that
\be
\{ F,H\}_2^W\equiv W[\{F,H\}_2] - \{ W[F], H\}_2 - \{ F, W[H] \}_2 = \{ F,H\}_1
\label{E11}\ee
holds for arbitrary holomorphic functions. Moreover, because of the properties of derivations,
it is sufficient to verify this for the evaluation functions $g_{ij}$ and $L_a$ that yield
coordinates on the manifold $\fM$.
Now it is clear that $W[g_{ij}]=0$ and $W[L_a]$ is a constant.
Therefore, if both $F$ and $H$ are evaluation functions, then $\{ F,H\}_2^W= W[\{F,H\}_2]$.
Thus we see from \eqref{ref3} that the relation $\{ g_{ij}, g_{kl}\}_2^W=0$ is valid.
To proceed further, we use that $W[g]\equiv  \sum_{ij} W[g_{ij}] e_{ij}=0$ and
$W[L] \equiv \sum_a W[L_a] T^a= \1_n$.  Then it follows from the formulae \eqref{ref4} and \eqref{ref5} that
\be
W[\{ g_{ij}, L_a\}_2] = (T_a g)_{ij}
\quad\hbox{and}\quad
W[\{ L_a, L_b\}_2] = \langle [L, T_a], T_b\rangle = \langle L, [T_a, T_b]\rangle.
\ee
Comparison with \eqref{ref2} implies the claim of the theorem.
\end{proof}

\begin{remark} \label{Rem:2.2}
The first line in \eqref{E9} represents the standard multiplicative Poisson structure on the group $G$.
The second line of $\{\ ,\ \}_2$ can be recognized as
the holomophic  extension of the well known Semenov-Tian-Shanksy bracket from $G$ to $\cG$,
where $G$ is regarded as an open submanifold of $\cG$. We recall that
the Semenov-Tian-Shansky bracket originates from the  Poisson--Lie group dual to $G$ \cite{A,STS}.
\end{remark}

Denote by $V_H^i$ $(i=1,2)$ the Hamiltonian vector field associated with the holomorphic function $H$
through the respective Poisson bracket $\{\ ,\ \}_i$.
For any holomorphic function, we have the derivatives
\be
V_H^i[F] = \{ F,H\}_i.
\label{E13}\ee
We are interested in the Hamiltonians
\be
H_m(g,L):= \frac{1}{m} \tr(L^m),
\qquad
\forall m\in \N.
\label{E14}\ee

\begin{proposition} \label{Prop:2.3}
The vector fields associated with the functions $H_m$ are bi-Hamiltonian, since we have
\be
\{ F, H_m\}_2 = \{ F, H_{m+1}\}_1, \qquad \forall m\in \N,\quad \forall F \in \Holm.
\label{E15}\ee
The derivatives of the matrix elements of $(g,L)\in \fM$ give
\be
V_{H_m}^2[g] = V_{H_{m+1}}^1[g] = L^mg,
\qquad
V_{H_m}^2[L] = V_{H_{m+1}}^1[L] =0,
\quad
\forall m\in \N,
\label{E16}\ee
and the flow of $V_{H_m}^2 = V_{H_{m+1}}^1$
through the initial value $(g(0), L(0))$ is
\be
(g(z), L(z)) = (\exp(z L(0)^m) g(0), L(0)).
\label{E17}\ee
\end{proposition}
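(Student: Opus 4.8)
The plan is to reduce everything to the gradients of the Hamiltonians $H_m$ of \eqref{E14} and then exploit two cancellations. First I would compute the derivatives entering \eqref{E8} and \eqref{E9}. Since $H_m$ depends on $L$ alone, the defining relations \eqref{E5} give $\nabla_1 H_m = \nabla_1' H_m = 0$, while \eqref{RB} together with cyclicity of the trace yields $\langle d_2 H_m, X\rangle = \tr(L^{m-1}X)$, i.e. $d_2 H_m = L^{m-1}$. Feeding this into \eqref{E7} produces the key fact that the two second gradients coincide: $\nabla_2 H_m = \nabla_2' H_m = L^m$.

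Next I would substitute these into the two brackets paired against an arbitrary $F\in\Holm$. In $\{F, H_{m+1}\}_1$ the term $\langle \nabla_1 H_{m+1}, d_2 F\rangle$ drops, and the commutator term $\langle L, [d_2 F, L^m]\rangle$ vanishes because $L$ commutes with $L^m$ and the trace is cyclic, leaving $\{F,H_{m+1}\}_1 = \langle \nabla_1 F, L^m\rangle$. In $\{F, H_m\}_2$ the entire first line of \eqref{E9} and the last term of the third line vanish because $\nabla_1 H_m = \nabla_1' H_m = 0$. In the two surviving terms the only combination of $H_m$-gradients that occurs is $r_+\nabla_2' H_m - r_-\nabla_2 H_m = (r_+ - r_-)L^m = L^m$, since $\nabla_2 H_m = \nabla_2' H_m$ and $r_+ - r_- = \id$ by \eqref{rpm}; this is precisely where the $r$-matrix dependence disappears. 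Hence $\{F, H_m\}_2 = \langle \nabla_2 F - \nabla_2' F, L^m\rangle + \langle \nabla_1 F, L^m\rangle$, and the first summand equals $\langle [L, d_2 F], L^m\rangle$, which again vanishes by cyclicity of the trace. Both brackets therefore collapse to $\langle \nabla_1 F, L^m\rangle$, establishing \eqref{E15}.

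These two cancellations — the collapse of the $r$-dependence through $r_+ - r_- = \id$ combined with $\nabla_2 H_m = \nabla_2' H_m$, and the vanishing of the trace/commutator terms — are the only non-mechanical points; the rest is bookkeeping, so I do not anticipate a genuine obstacle here. Finally I would read off the vector fields. As the two brackets agree on every $F$, it suffices to evaluate $\langle \nabla_1(\cdot), L^m\rangle$ on the coordinate functions. Using $\nabla_1 g_{ij} = g e_{ji}$ from \eqref{ref1}, one finds $\langle g e_{ji}, L^m\rangle = \tr(g e_{ji} L^m) = (L^m g)_{ij}$, so $V_{H_m}^2[g] = V_{H_{m+1}}^1[g] = L^m g$; and since $\nabla_1 L_a = 0$ one gets $V_{H_m}^2[L] = V_{H_{m+1}}^1[L] = 0$, which is \eqref{E16}. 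The flow equations are thus $\dot g = L^m g$ and $\dot L = 0$, whence $L(z) = L(0)$ is constant and $g$ solves a linear equation with the constant coefficient matrix $L(0)^m$, giving $(g(z), L(z)) = (\exp(z L(0)^m)\, g(0), L(0))$, which is \eqref{E17}.
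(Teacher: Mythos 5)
Your proposal is correct and follows essentially the same route as the paper: compute $\nabla_1 H_m=\nabla_1'H_m=0$, $d_2H_m=L^{m-1}$, hence $\nabla_2H_m=\nabla_2'H_m=L^m$ and $r_+\nabla_2'H_m-r_-\nabla_2H_m=L^m=d_2H_{m+1}$, then substitute into \eqref{E8} and \eqref{E9} to find that both brackets collapse to $\langle\nabla_1F,L^m\rangle$. The only difference is that you spell out the trace/commutator cancellations that the paper leaves implicit in the word ``substitution''; the reading-off of the vector fields and the flow is identical.
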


\begin{proof}
We obtain the derivatives
\be
\nabla_1 H_m(g,L) =\nabla'_1 H_m(g,L) =0,\qquad  d_2 H_{m}(g,L) = L^{m-1},
\quad \forall m=1,2, \dots.
\label{E18}\ee
As a result of \eqref{E7},
\be
\nabla_2 H_m(g,L) = \nabla'_2 H_m(g,L) = L^m,
\ee
and thus, by \eqref{E4},
\be
r_+ \nabla_2' H_m(g, L)  - r_-\nabla_2 H_m(g,L) = L^m = d H_{m+1}(g,L).
\ee
The substitution of these relations into the formulae of Proposition \ref{Th:2.1}  gives
\be
\{F, H_m\}_2(g, L) = \{F, H_{m+1}\}_1(g,L) = \langle \nabla_1 F(g,L), L^m \rangle.
\ee
By the very meaning of the Hamiltonian vector field associated with a function,
these Poisson brackets imply \eqref{E16}, and then \eqref{E17} follows, too.
\end{proof}

Like in the compact case \cite{FeLMP}, we call the $H_m$ `free Hamiltonians' and conclude from
Proposition \ref{Prop:2.3} that they generate a bi-Hamiltonian hierarchy on the holomorphic cotangent bundle $\fM$.

\section{The reduced bi-Hamiltonian hierarchy}
\label{sec:F}

The essence of Hamiltonian symmetry reduction is that one keeps only the `observables' that are
invariant with respect to the pertinent group action.
Here, we  apply this principle to the adjoint action of $G$ on $\fM$, for which $\eta \in G$ acts by the holomorphic
diffeomorphism $A_\eta$,
\be
A_\eta: (g,L) \mapsto (\eta g \eta^{-1}, \eta L \eta^{-1}).
\label{act1}\ee
Thus we keep only the $G$ invariant holomorphic functions
on $\fM$, whose set is denoted
\be
\Holm^G:= \{ F\in \Holm \mid F(g,L) = F(\eta g \eta^{-1}, \eta L \eta^{-1}),\,\, \forall (g,L)\in \fM,\,\, \eta \in G\}.
\label{F1}\ee
For invariant functions, the formula of the second Poisson brackets simplifies drastically.

\begin{lemma} \label{Lem:3.0}
For $F, H\in \Holm^G$, the formula \eqref{E9} can be rewritten as follows:
\be
2\{ F, H\}_2 = \langle \nabla_1 F, \nabla_2 H + \nabla_2' H \rangle - \langle \nabla_1 H, \nabla_2 F + \nabla_2' F \rangle
+ \langle \nabla_2 F, \nabla_2' H \rangle
- \langle \nabla_2 H, \nabla_2' F\rangle.
\label{F1+3}\ee
\end{lemma}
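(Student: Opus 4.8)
The plan is to turn the $G$-invariance of $F,H\in\Holm^G$ into a pointwise linear relation among their derivatives, to substitute this relation into \eqref{E9}, and then to watch the $r$-dependent terms disappear, leaving exactly \eqref{F1+3}.

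First I would differentiate the invariance identity $F(e^{zX}g\,e^{-zX},e^{zX}Le^{-zX})=F(g,L)$ at $z=0$. By \eqref{E5} the first-argument variation $Xg-gX$ produces $\langle\nabla_1 F-\nabla_1'F,X\rangle$, while the second-argument variation $[X,L]$ produces $\langle d_2F,[X,L]\rangle$. Rewriting the latter by cyclicity of the trace as $\langle L\,d_2F-(d_2F)L,\,X\rangle=\langle\nabla_2F-\nabla_2'F,X\rangle$ and invoking non-degeneracy of $\langle\,,\,\rangle$, I obtain the moment-map constraint
\[
\nabla_1'F=\nabla_1F+\nabla_2F-\nabla_2'F,
\]
together with its analogue for $H$. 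This is the only input drawn from invariance.

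Next I would eliminate $\nabla_1'F$ and $\nabla_1'H$ from \eqref{E9} using this identity, and split the mixed combinations by $r_+\nabla_2'H-r_-\nabla_2H=r(\nabla_2'H-\nabla_2H)+\half(\nabla_2H+\nabla_2'H)$ (and similarly for $F$), so that each term of \eqref{E9} separates into an $r$-linear and an $r$-free piece. The decisive structural fact is that $r$ is skew-adjoint for the trace form, $\langle rX,Y\rangle=-\langle X,rY\rangle$, which follows from \eqref{E3} since under $\tr$ strictly upper parts pair only with strictly lower parts. Using this skew-adjointness together with the symmetry of $\langle\,,\,\rangle$, I expect all $r$-linear monomials to cancel in pairs, leaving
\[
2\{F,H\}_2=\langle\nabla_2F-\nabla_2'F,\nabla_2H+\nabla_2'H\rangle+\langle\nabla_1F,\nabla_2H+\nabla_2'H\rangle-\langle\nabla_1H,\nabla_2F+\nabla_2'F\rangle.
\]

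To finish, I would expand the first pairing and use cyclicity once more in the form $\langle\nabla_2F,\nabla_2H\rangle=\langle\nabla_2'F,\nabla_2'H\rangle$ (both equal $\tr(L\,d_2F\,L\,d_2H)$) to annihilate the diagonal terms, and then the symmetry $\langle\nabla_2'F,\nabla_2H\rangle=\langle\nabla_2H,\nabla_2'F\rangle$ to recast the remainder into the last two terms of \eqref{F1+3}. The only real obstacle is bookkeeping: the substitution generates on the order of a dozen $r$-linear terms, and their cancellation becomes transparent only once they are grouped according to which pair of derivatives they contract; organizing that grouping, with the correct signs supplied by skew-adjointness, is where care is needed.
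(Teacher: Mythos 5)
Your proposal is correct and follows essentially the same route as the paper: it derives the invariance identity $\nabla_1'F=\nabla_1F+\nabla_2F-\nabla_2'F$, substitutes it into \eqref{E9}, cancels all $r$-dependent terms via the skew-adjointness $\langle rX,Y\rangle=-\langle X,rY\rangle$, and finishes with $\langle\nabla_2F,\nabla_2H\rangle=\langle\nabla_2'F,\nabla_2'H\rangle$. Your intermediate formula and the final reduction to \eqref{F1+3} both check out.
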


\begin{proof}
We start by noting that for a $G$ invariant function $H$ the
relation
\be
H(g e^{zX}, L) = H(e^{zX} g, e^{zX} L e^{-zX}), \qquad \forall z\in \C,\, \forall X\in \cG,
\label{F1+1}\ee
implies the identity
\be
\nabla'_1 H = \nabla_1 H + \nabla_2 H - \nabla_2' H.
\label{F1+2}\ee
Indeed, since $e^{zX} L e^{-zX} =L+ z X L - z LX + \mathrm{o}(z)$, taking the derivative of both sides of \eqref{F1+1} at $z=0$ gives
\be
\langle X, \nabla_1'H \rangle = \langle X, \nabla_1 H \rangle + \langle XL - LX, d_2 H\rangle = \langle X, \nabla_1 H + \nabla_2 H - \nabla_2' H\rangle.
\label{3ref1}\ee
Since $X$ is arbitrary, \eqref{F1+2} follows.

Formally, \eqref{F1+3} is obtained
from \eqref{E9} by setting $r$ to zero, i.e., $r$ cancels from all terms.
The verification of this cancellation relies on the identity \eqref{F1+2} and is completely straightforward.
We express $\nabla_1' H$ through the other derivatives
with the help of  \eqref{F1+2}, apply the same to $\nabla_1' F$, and then collect terms in \eqref{E9}.
To cancel all  terms containing $r$ we use also that $\langle rX,Y\rangle = - \langle X, rY\rangle$.
 After cancelling those terms,  the equality \eqref{F1+3} is obtained by utilizing the identity
 \be
 \langle \nabla_2 F, \nabla_2 H \rangle - \langle \nabla_2' F, \nabla_2' H \rangle =0,
 \ee
 which is verified by means of the definitions (\ref{E1}) and (\ref{E7}).
\end{proof}

\begin{lemma} \label{Lem:3.1}
$\Holm^G$ is closed with respect to both Poisson brackets of Theorem \ref{Th:2.1}.
\end{lemma}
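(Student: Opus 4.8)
The plan is to deduce closure for both brackets from a single computational input, namely the $\Ad$-equivariance of the various gradients under the action \eqref{act1}, combined with the invariance property \eqref{trinv} of the trace form. Since membership in $\Holm^G$ just means $F = F\circ A_\eta$ for every $\eta$, it suffices to prove $\{F,H\}_i\circ A_\eta = \{F,H\}_i$ for $F,H\in\Holm^G$ and all $\eta\in G$. The first bracket \eqref{E8} is built only from $\nabla_1$, $d_2$ and the trace form, and Lemma \ref{Lem:3.0} rewrites the second bracket of two invariant functions in the $r$-free form \eqref{F1+3}; in both cases the right-hand side will turn out to be a pairing of $\Ad$-covariant $\cG$-valued quantities by the $\Ad$-invariant form, hence manifestly $G$-invariant.

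First I would record how the gradients transform. For $F\in\Holm^G$ and $\eta\in G$, inserting the invariance $F(g',L')=F(\eta^{-1}g'\eta,\eta^{-1}L'\eta)$ into the defining relations \eqref{E5} and \eqref{RB} at the point $A_\eta(g,L)$, and conjugating the exponential and the additive variation by $\eta$, gives
\be
\nabla_1 F(A_\eta(g,L)) = \Ad_\eta\bigl(\nabla_1 F(g,L)\bigr),\qquad d_2 F(A_\eta(g,L)) = \Ad_\eta\bigl(d_2 F(g,L)\bigr).
\ee
Because $L$ itself transforms as $L\mapsto\Ad_\eta L$, and $\nabla_2 F = L\,d_2F$, $\nabla_2' F=(d_2F)L$ by \eqref{E7}, the same covariance is inherited by $\nabla_2 F$ and $\nabla_2' F$. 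This is the only genuine calculation, and it is a short differentiation of the invariance identity at the relevant point. With these rules, closure under $\{\ ,\ \}_1$ is immediate: each term of \eqref{E8} has the form $\langle\Ad_\eta A,\Ad_\eta B\rangle$ or $\langle\Ad_\eta L,[\Ad_\eta A,\Ad_\eta B]\rangle$, which reduces to the untransformed expression by \eqref{trinv} together with $\Ad_\eta[A,B]=[\Ad_\eta A,\Ad_\eta B]$.

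The main obstacle is that the same argument applied directly to \eqref{E9} would break down, since the operator $r$ is tied to the triangular decomposition \eqref{RA} and is \emph{not} $\Ad$-invariant, so the terms containing $r$ do not transform covariantly. The resolution is to invoke Lemma \ref{Lem:3.0}: for invariant $F,H$ the second bracket equals the $r$-free expression \eqref{F1+3}, every term of which is again a trace-form pairing of the $\Ad$-covariant quantities $\nabla_1 F$, $\nabla_2 F$, $\nabla_2' F$. Applying the covariance relations above and \eqref{trinv} then yields $\{F,H\}_2\circ A_\eta=\{F,H\}_2$, so $\Holm^G$ is closed under $\{\ ,\ \}_2$ as well. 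Thus the role of Lemma \ref{Lem:3.0} is precisely to remove the one non-equivariant ingredient before invoking $\Ad$-covariance.
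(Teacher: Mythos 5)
Your proposal is correct and follows essentially the same route as the paper: establish $\Ad$-equivariance of $\nabla_1 F$, $d_2 F$ (hence of $\nabla_2 F$, $\nabla_2'F$) by differentiating the invariance identity, then combine \eqref{E8} and the $r$-free form \eqref{F1+3} from Lemma \ref{Lem:3.0} with the invariance \eqref{trinv} of the trace form. Your explicit remark that Lemma \ref{Lem:3.0} is needed precisely because $r$ is not $\Ad$-equivariant makes explicit a point the paper leaves implicit, but the argument is the same.
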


\begin{proof}
Let us  observe that
the derivatives of the $G$ invariant functions are equivariant,
\be
\nabla_i H (\eta g \eta^{-1}, \eta L \eta^{-1}) = \eta \left(\nabla_i H(g, L))\right) \eta^{-1},
\quad i=1,2,
\label{equivar}\ee
and similar for $\nabla_i' H$. In order to see this, notice that
\be
H(e^{zX} \eta g \eta^{-1}, \eta L \eta^{-1}) = H(\eta^{-1} e^{zX} \eta g, L) = H(e^{z \eta^{-1} X \eta} g, L)
\label{equivar+1}\ee
holds for any $X\in \cG$ and $\eta \in G$ if $H$ is an invariant function. By taking derivative, we obtain
\be
\langle X, \nabla_1 H (\eta g \eta^{-1}, \eta L \eta^{-1}) \rangle =
\langle \eta^{-1} X \eta, \nabla_1 H(g, L) \rangle
=\langle   X , \eta (\nabla_1 H(g, L)) \eta^{-1}  \rangle.
\label{equivar+2}\ee
This leads to the $i=1$ case of \eqref{equivar}.
The property
\be
d_2 H (\eta g \eta^{-1}, \eta L \eta^{-1}) = \eta \left(d_2 H(g, L))\right) \eta^{-1}
\ee
follows in a similar manner, and it implies the  $i=2$ case of \eqref{equivar}.

By combining the formulas \eqref{E8} and \eqref{F1+3} with the equivariance property of the derivatives of $F$ and $H$,
we may conclude from \eqref{trinv}  that if $F, H$ are invariant,  then so is $\{F,H\}_i$ for $i=1,2$.
\end{proof}

We wish to characterize the Poisson algebras of the $G$ invariant functions.
To start, we consider the diagonal subgroup $G_0 < G$,
\be
G_0:= \{ Q  \mid Q=\diag(Q_1,\dots, Q_n),\,\, Q_i \in \C^*\},
\label{F2}\ee
and its regular part $G_0^\reg$, where  $Q_i \neq Q_j$ for all $i\neq j$.
We let $\cN< G$ denote the normalizer of $G_0$ in $G$,
\be
\cN = \{ g\in G\mid g G_0 = G_0 g\}.
\label{F11+}\ee
The normalizer contains $G_0$ as a normal subgroup, and
the corresponding quotient is the permutation group,
\be
\cN/G_0 = S_n.
\label{F3}\ee
We also let $G^\reg\subset G$ denote the dense open subset consisting of the conjugacy classes having representatives in $G_0^\reg$.
Next, we define
\be
\fM^\reg:= \{ (g,L)\in \fM \mid g\in G^\reg\}
\label{F4}\ee
and
\be
\fM_0^\reg:= \{ (Q,L)\in \fM \mid Q\in G_0^\reg \}.
\label{F5}\ee
These are complex manifolds, equipped with their own holomorphic functions.
Now we introduce the chain of commutative algebras
\be
\Holm_\red \subset \Hol(\fM_0^\reg)^{\cN} \subset \Hol(\fM_0^\reg)^{G_0}.
\label{F6}\ee
The last two sets contain the respective invariant elements of $\Hol(\fM_0^\reg)$, and
$\Holm_\red$ \emph{contains the restrictions of the elements of $\Holm^G$ to $\fM_0^\reg$}.
To put this in a more formal manner, let
\be
\iota: \fM_0^\reg \to \fM
\label{F7}\ee
be the tautological embedding.
Then pull-back by $\iota$ provides an isomorphism between  $\Holm^G$ and $\Holm_\red$.
We here used that any holomorphic (or even continuous) function on $\fM$ is
uniquely determined by its restriction to $\fM^\reg$.
Similar, we obtain the map
\be
\iota^*: \Hol(\fM^\reg)^G \to \Hol(\fM_0^\reg)^\cN,
\label{F8}\ee
which is also injective and surjective.

It may be worth elucidating why the pull-back \eqref{F8} is an isomorphism.  To this end, consider
any map $\eta: G^\reg \to G$ such that $\eta(g) g \eta(g)^{-1} \in G_0^\reg$.
Notice that $\eta(g)$ is unique up to left-multiplication by elements of $\cN$ \eqref{F11+}.
Consequently, if $f\in \Hol(\fM_0^\reg)^\cN$, then
\be
F(g,L):= f(\eta(g) g \eta(g)^{-1}, \eta(g) L \eta(g)^{-1})
\label{recon}\ee
yields a well-defined, $G$ invariant function on $\fM^\reg$, which restricts to $f$.
The function $F$ is holomorphic, since locally, on an open set around any fixed
$g_0 \in G^\reg$, one can choose $\eta(g)$ to depend holomorphically on $g$.
Regarding this classical result of perturbation theory, see, e.g.,
Theorem 2.1 in \cite{ACL}.

\begin{definition} \label{Def:3.2}
Let $f,h \in \Holm_\red$ be related to $F,H \in \Holm^G$ by $f = F \circ \iota$ and $h = H \circ \iota$.
In consequence of  Lemma \ref{Lem:3.1}, we can define $\{f, h\}^\red_i \in \Holm_\red$ by the relation
\be
\{ f, h\}_i^\red := \{ F, H\}_i \circ \iota, \qquad i=1,2.
\label{F9}\ee
This gives rise to the reduced Poisson algebras $(\Holm_\red, \{\ ,\ \}_i^\red)$.
\end{definition}

The main goal of this paper is to derive formulae for the reduced Poisson brackets \eqref{F9}.
To do so, we now note that any $f\in \Hol(\fM_0^\reg)$ has the $\cG_0$-valued derivative
$\nabla_1 f$ and the $\cG$-valued derivative $d_2 f$, defined by
\be
\langle \nabla_1 f(Q,L), X_0\rangle = \dz f(e^{zX_0} Q, L),
\quad
\langle d_2 f(Q,L), X \rangle = \dz f(Q, L + z X),
\label{der0}\ee
which are required for all $X_0\in \cG_0$ \eqref{RA}, $X\in \cG$.
For any $Q\in G_0$,  the linear operator $\Ad_Q: \cG \to \cG$ acts as
 $\Ad_Q(X) = Q X Q^{-1}$.
Set
\be
\cG_\perp := \cG_< + \cG_>,
\ee
where $\cG_<$ (resp.~$\cG_>$) is the strictly lower (resp.~upper) triangular subalgebra of $\cG$ introduced in \eqref{RA}.
Notice that for $Q\in G_0^\reg$ the operator $(\Ad_Q - \id)$ maps $\cG_\perp$ to $\cG_\perp$ in an invertible manner.
Building on \eqref{E2}, we have the decomposition
\be
X= X_0 + X_\perp
\quad \hbox{with}\quad
X_\perp = X_< + X_>,\quad \forall X\in \cG.
\ee
Using this, for any $Q\in G_0^\reg$,  the `dynamical $r$-matrix' $\cR(Q) \in \mathrm{End}(\cG)$ is given by
\be
\cR(Q) X = \frac{1}{2} \left( \Ad_Q + \id\right)\circ \left( \Ad_Q - \id\right)_{\vert \cG_\perp}^{-1} X_\perp,
\qquad \forall X \in \cG,
\ee
and we remark its antisymmetry property
\be
\langle \cR(Q) X, Y\rangle = - \langle X, \cR(Q) Y \rangle, \qquad \forall X,Y\in \cG.
\label{asym}\ee
This can be seen  by writing $Q=e^q$ with $q\in \cG_0$ , whereby we obtain
\be
\cR(Q) X = \left(\frac{1}{2} \coth \frac{1}{2} \ad_q \right)  X_\perp,
\label{Rcoth}\ee
Here $\ad_q(X_\perp) = [q, X_\perp]$, which gives an anti-symmetric, invertible linear  operator on $\cG_\perp$.
(The invertibility holds since $Q\in G_0^\reg$, and is needed for $\coth \frac{1}{2}\ad_q$ to be well defined on $\cG_\perp$.)
Below, we shall also employ the shorthand
\be
[X,Y]_{\cR(Q)}:= [ \cR(Q) X, Y] + [X, \cR(Q) Y],
\qquad
\forall X,Y\in \cG.
\label{Rbrac}\ee

\begin{lemma} \label{Lem:3.3}
Consider $f\in \Hol(\fM_0^\reg)^\cN$ given by $f = F \circ \iota$, where $F \in \Hol(\fM^\reg)^G$.
Then the derivatives of $f$ and $F$ satisfy the following relations at any $(Q,L) \in \fM_0^\reg$:
\be
d_2 F(Q,L) = d_2 f(Q,L), \qquad [L, d_2 f(Q,L)]_0 =0,
\label{F14}\ee
\be
\nabla_1 F(Q,L) = \nabla_1 f(Q,L) - (\cR(Q) + \frac{1}{2} \id)[ L, d_2 f(Q,L)].
\label{F15}\ee
\end{lemma}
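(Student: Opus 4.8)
The plan is to establish the two relations in Lemma~\ref{Lem:3.3} by comparing the defining derivatives of $F$ on $\fM^\reg$ with those of $f=F\circ\iota$ on $\fM_0^\reg$, using the $\cN$-invariance of $f$ as the key structural input. The first relation in \eqref{F14} is immediate: both $d_2 F(Q,L)$ and $d_2 f(Q,L)$ are defined by differentiating in the $L$-direction while holding the group argument fixed at $Q$, via \eqref{RB} and \eqref{der0} respectively, and since $f(Q,L)=F(Q,L)$ for $(Q,L)\in\fM_0^\reg$ with $L$ varying freely over all of $\cG$, the two $\cG$-valued gradients coincide.

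For the constraint $[L,d_2 f(Q,L)]_0=0$, I would exploit the $G_0$-invariance of $f$ (which $f$ inherits as part of its $\cN$-invariance, since $G_0<\cN$). The plan is to write out $f(Q, e^{zX_0}Le^{-zX_0})=f(Q,L)$ for $X_0\in\cG_0$, note that conjugating $Q\in G_0$ by $e^{zX_0}$ leaves it fixed, and differentiate at $z=0$. This gives $\langle [X_0,L], d_2 f\rangle=0$, hence $\langle X_0, [L, d_2 f]\rangle=0$ for all $X_0\in\cG_0$; by non-degeneracy of the trace form restricted to $\cG_0$, this forces the diagonal part $[L,d_2 f]_0$ to vanish.

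The substantive relation is \eqref{F15}, and the approach is to compute $\nabla_1 F(Q,L)$ directly from its definition \eqref{E5} and relate it to $\nabla_1 f(Q,L)$. First I would use that $F$ is $G$-invariant on $\fM^\reg$: applying the equivariance machinery already developed for Lemma~\ref{Lem:3.1} (specifically the identity \eqref{F1+2}, namely $\nabla_1' F=\nabla_1 F+\nabla_2 F-\nabla_2' F$, which holds for $G$-invariant functions and by \eqref{E7} reads $\nabla_1' F=\nabla_1 F+[L,d_2 F]$), I can relate the left- and right-translation gradients of $F$. The point is that $\nabla_1 F$ is $\cG$-valued whereas $\nabla_1 f$ is only $\cG_0$-valued, so the off-diagonal part of $\nabla_1 F$ must be reconstructed. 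To pin it down I would differentiate the invariance relation that expresses $F$ near $(Q,L)$ through $f$: for $X_0\in\cG_0$ the diagonal directions give $\langle\nabla_1 F, X_0\rangle=\langle\nabla_1 f, X_0\rangle$, so the diagonal parts match, $(\nabla_1 F)_0=\nabla_1 f$ (using $[L,d_2 f]_0=0$ to handle the correction term's diagonal). For the off-diagonal directions $X_\perp\in\cG_\perp$, the key is that moving $Q$ by $e^{zX_\perp}$ leaves $G_0^\reg$ only to first order, and the $G$-invariance of $F$ lets one trade this motion for a compensating conjugation; the resulting infinitesimal compensation is governed precisely by $(\Ad_Q-\id)^{-1}$ on $\cG_\perp$, which is where the operator $\cR(Q)$ from \eqref{Rcoth} enters.

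The main obstacle will be the off-diagonal computation producing the operator $\cR(Q)+\frac12\id$ with the correct coefficient. Concretely, I expect to differentiate the reconstruction formula \eqref{recon} — with $\eta(g)$ chosen holomorphically near $Q$ so that $\eta(g)g\eta(g)^{-1}\in G_0^\reg$ — along a curve $e^{zX_\perp}Q$, tracking how the infinitesimal normalizing element $\frac{d}{dz}\big|_{z=0}\eta(e^{zX_\perp}Q)\in\cG_\perp$ must solve a linear equation whose solution involves $(\Ad_Q-\id)^{-1}_{\vert\cG_\perp}$. Combining the induced change in the $L$-slot (a conjugation of $L$ by the compensating element, which brings in $[L,d_2 f]$ paired against the $r$-matrix) with the change in the $Q$-slot should assemble into exactly $-(\cR(Q)+\frac12\id)[L,d_2 f]$ for the off-diagonal part of $\nabla_1 F$. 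The delicate points are getting the sign and the $\frac12\id$ shift right, and confirming that the diagonal and off-diagonal contributions combine into the single clean expression \eqref{F15}; I would verify the coefficient by testing against the defining action of $\cR(Q)$ in \eqref{Rcoth} and cross-checking with the parametrization \eqref{I5}, where the same operator $\cR(Q)+\frac12\id$ appears.
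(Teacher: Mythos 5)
Your proposal is correct and follows essentially the same route as the paper: both parts of \eqref{F14} are obtained exactly as you describe, the diagonal parts of \eqref{F15} are matched trivially, and the off-diagonal part comes from a first-order invariance computation in the directions $T\in\cG_\perp$, with $(\Ad_Q-\id)^{-1}_{\vert\cG_\perp}$ producing the operator $\cR(Q)+\frac{1}{2}\id=\Ad_Q(\Ad_Q-\id)^{-1}$. The only difference is one of economy: rather than differentiating the reconstruction formula \eqref{recon} and solving for the infinitesimal compensating conjugation, the paper differentiates $F$ along the orbit curve $(e^{zT}Qe^{-zT},e^{zT}Le^{-zT})$, which vanishes by $G$-invariance and immediately yields the linear relation $(\Ad_{Q^{-1}}-\id)(\nabla_1 F)_\perp=[L,d_2F]_\perp$, so no explicit (holomorphic) diagonalizer is needed.
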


\begin{proof}
The equalities \eqref{F14} hold since $f$ is the restriction of $F$. In particular, it satisfies
\be
0 = \dz f(Q, e^{z X_0} L e^{-z X_0})  = \langle d_2f(Q,L), [X_0, L]\rangle =
\langle [L, d_2 f(Q,L)], X_0 \rangle, \quad \forall X_0\in \cG_0.
\label{F16}\ee
Concerning  \eqref{F15},  the equality of the $\cG_0$ parts,  $(\nabla_1 F(Q,L))_0 = (\nabla_1 f(Q,L))_0$, is obvious.
Then take any $T \in \cG_\perp$, for which we have
\be
0 = \dz F(e^{z T} Q e^{-zT}, e^{zT} L e^{-zT}) = \langle T, (\id  - \Ad_{Q^{-1}}) \nabla_1 F(Q,L) + [L, d_2 F(Q,L)]\rangle.
\label{F17}\ee
Therefore
\be
(\Ad_{Q^{-1}} - \id ) (\nabla_1 F(Q,L))_\perp = [L, d_2 F(Q,L)]_\perp,
\label{F18}\ee
which implies \eqref{F15}.
\end{proof}

\begin{theorem} \label{Th:3.4}
For  $f,h\in \Hol(\fM)_\reg$, the reduced Poisson brackets defined by \eqref{F9}
 can be described explicitly as follows:
\be
\{f,h\}_1^\red(Q,L) = \langle \nabla_1 f, d_2 h\rangle - \langle \nabla_1 h, d_2 f \rangle
+\langle L, [ d_2 f, d_2 h]_{\cR(Q)} \rangle,
\label{red1}\ee
and
\bea
&&\{f,h\}_2^\red (Q,L)= \frac{1}{2} \langle \nabla_1 f, \nabla_2 h + \nabla_2' h \rangle
- \frac{1}{2} \langle \nabla_1 h, \nabla_2 f + \nabla_2' f\rangle \label{red2} \\
&& \qquad\qquad \qquad \quad +\langle \nabla_2f, \cR(Q)( \nabla_2 h)\rangle  - \langle \nabla_2' f, \cR(Q)( \nabla_2' h)  \rangle,
\nonumber \eea
where all derivatives are taken at $(Q,L)\in \fM_0^\reg$, and the notation \eqref{E7} is in force.
These formulae give two compatible Poisson brackets on $\Hol(\fM)_\red$.
\end{theorem}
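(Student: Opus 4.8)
The plan is to treat the two assertions separately: first derive the closed formulas \eqref{red1} and \eqref{red2} by feeding the relations of Lemma~\ref{Lem:3.3} into the unreduced brackets evaluated on $\fM_0^\reg$, and then obtain the Poisson property and the compatibility essentially for free from the reduction framework already set up in Definition~\ref{Def:3.2}.

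For \eqref{red1} I would start from the canonical bracket \eqref{E8} applied to invariant extensions $F,H$ and evaluated at $(Q,L)\in\fM_0^\reg$. Writing $A:=d_2F=d_2f$ and $B:=d_2H=d_2h$ by the first relation of \eqref{F14}, the only nontrivial substitution is $\nabla_1F=\nabla_1f-(\cR(Q)+\tfrac12\id)[L,A]$ from \eqref{F15}, and likewise for $H$. Inserting these, the contributions split into an $r$-free piece that reproduces $\langle\nabla_1f,B\rangle-\langle\nabla_1h,A\rangle$, a family of $\tfrac12\id$-terms, and a family of $\cR(Q)$-terms. Using the invariance property $\langle[L,A],C\rangle=\langle L,[A,C]\rangle$ of the trace form together with the antisymmetry \eqref{asym} of $\cR(Q)$, the $\tfrac12\id$-terms are seen to cancel exactly against the original $\langle L,[A,B]\rangle$, while the $\cR(Q)$-terms reassemble into $\langle L,[\cR(Q)A,B]\rangle+\langle L,[A,\cR(Q)B]\rangle=\langle L,[A,B]_{\cR(Q)}\rangle$, which is precisely \eqref{red1}.

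For \eqref{red2} I would not use \eqref{E9} directly but rather its invariant reduction \eqref{F1+3} of Lemma~\ref{Lem:3.0}, in which $r$ has already disappeared. Here $\nabla_2F=\nabla_2f$ and $\nabla_2'F=\nabla_2'f$, so again only $\nabla_1$ must be rewritten via \eqref{F15}; the crucial observation is that $[L,d_2f]=\nabla_2f-\nabla_2'f$, whence the correction is $(\cR(Q)+\tfrac12\id)(\nabla_2f-\nabla_2'f)$. Substituting and expanding, the $\tfrac12\id$-contributions combine, using the symmetry of the trace form, with the quadratic terms $\langle\nabla_2f,\nabla_2'h\rangle-\langle\nabla_2h,\nabla_2'f\rangle$ already present in \eqref{F1+3} and cancel completely, whereas the $\cR(Q)$-contributions, after applying the antisymmetry \eqref{asym}, collapse onto the two surviving terms $\langle\nabla_2f,\cR(Q)\nabla_2h\rangle-\langle\nabla_2'f,\cR(Q)\nabla_2'h\rangle$. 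This produces $2\{f,h\}_2^\red$ equal to twice the right-hand side of \eqref{red2}, matching the factor in \eqref{F1+3}.

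For the Poisson and compatibility claims I would appeal to the reduction structure rather than verify the axioms on \eqref{red1} and \eqref{red2} by hand. By Lemma~\ref{Lem:3.1}, $\Holm^G$ is a Poisson subalgebra of $(\Holm,\{\ ,\ \}_i)$ for both $i$, and by Definition~\ref{Def:3.2} the pull-back $\iota^*$ is an isomorphism of commutative algebras intertwining $\{\ ,\ \}_i$ with $\{\ ,\ \}_i^\red$; hence antisymmetry, the Leibniz rule and the Jacobi identity descend automatically, so each $\{\ ,\ \}_i^\red$ is a genuine Poisson bracket. For compatibility, Theorem~\ref{Th:2.2+} guarantees that $\{\ ,\ \}_2+\lambda\{\ ,\ \}_1$ is a Poisson bracket on $\fM$ for every $\lambda\in\C$; since $\Holm^G$ is closed under this combination too, its descent $\{\ ,\ \}_2^\red+\lambda\{\ ,\ \}_1^\red$ is Poisson on $\Holm_\red$, which is exactly the compatibility of the two reduced brackets. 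I expect the main obstacle to be the bookkeeping in the derivation of \eqref{red2}: one must keep track of the interference between the $\tfrac12\id$ and $\cR(Q)$ pieces generated by the single substitution \eqref{F15}, and confirm that the $\tfrac12\id$ pieces annihilate the quadratic terms while the $\cR(Q)$ pieces condense into the stated antisymmetric form.
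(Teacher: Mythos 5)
Your proposal is correct and follows essentially the same route as the paper: substituting the relations of Lemma~\ref{Lem:3.3} (with $[L,d_2f]=\nabla_2f-\nabla_2'f$) into \eqref{E8} and into the invariant form \eqref{F1+3} of the second bracket, observing the same cancellation of the $\tfrac12\id$-pieces against the quadratic terms and the condensation of the $\cR(Q)$-pieces via \eqref{asym}, and then inheriting the Jacobi identity and compatibility of the reduced brackets from Theorem~\ref{Th:2.2+} through the isomorphism of Definition~\ref{Def:3.2}. No gaps.
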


\begin{proof}
Let us begin with the first bracket, and note that at $(Q,L)\in \fM_0^\reg$ we have
\be
\langle \nabla_1 F, d_2 H\rangle = \langle \nabla_1 f, d_2 h\rangle
- \langle \cR(Q) [L, d_2 f], d_2 h \rangle - \half \langle [L, d_2 f], d_2 h\rangle,
\ee
since this follows from \eqref{F15}. Now the third term together with the analogous one
coming from $- \langle \nabla_1 H, d_2 F\rangle$ cancel the last term of \eqref{E8}.
Taking advantage of \eqref{asym}, the terms containing $\cR(Q)$ give the
expression written in \eqref{red1}.

Turning to the second bracket, we may start from \eqref{F1+3}, which is valid for elements of $\Holm^G$.
Using \eqref{F15} with
$[L, d_2 f] = \nabla_2 f - \nabla_2' f$, we can write
\bea
&&\langle \nabla_1 F,  \nabla_2' H +  \nabla_2 H\rangle =  \langle \nabla_1 f, \nabla_2' h +  \nabla_2 h \rangle\nonumber\\
  && \qquad \qquad \qquad \qquad  \qquad   + \langle \cR(Q)(\nabla_2' f - \nabla_2 f),  \nabla_2' h + \nabla_2 h \rangle  \label{derred2}\\
 && \qquad \qquad   \qquad \qquad  \qquad  + \frac{1}{2} \langle \nabla_2' f - \nabla_2 f,  \nabla_2' h + \nabla_2 h \rangle .
\nonumber \eea
This holds at $(Q,L)$, since $f, h$  are the restrictions of $F,H\in \Holm^G$.
We then combine \eqref{derred2} with the second term in \eqref{F1+3}.
Collecting terms and using the antisymmetry \eqref{asym}, we obtain
\bea
&&\langle \cR(Q)(\nabla_2' f - \nabla_2 f),  \nabla_2' h + \nabla_2 h \rangle
- \langle \cR(Q)(\nabla_2 h'  - \nabla_2 h),  \nabla_2' f + \nabla_2 f \rangle \nonumber\\
&& \qquad \qquad \qquad = 2 \langle \nabla_2 f, \cR(Q) (\nabla_2 h) \rangle -
2 \langle \nabla_2' f, \cR(Q) (\nabla_2' h) \rangle.
\eea
Moreover, we have
\be
 \frac{1}{2}\langle \nabla_2' f - \nabla_2 f,  \nabla_2' h + \nabla_2 h \rangle  -
 \frac{1}{2}\langle \nabla_2' h - \nabla_2 h,  \nabla_2' f + \nabla_2 f \rangle
 =  \langle \nabla_2' f ,   \nabla_2 h \rangle -     \langle \nabla_2' h ,   \nabla_2 f \rangle,
\ee
which cancels the contribution of the last two terms of \eqref{F1+3}.
In conclusion, we see that
the first and second lines in \eqref{derred2} and their counterparts ensuring antisymmetry  give the claimed
formula \eqref{red2}.

We know from Theorem \ref{Th:2.2+} that the original Poisson brackets on $\Hol(\fM)$ are compatible, which means
that their arbitrary linear combination $\{\ ,\ \}:= x \{\ ,\ \}_1 + y\{\ ,\ \}_2$ satisfies the
Jacobi identity. In particular, the Jacobi identity holds for elements of $\Hol(\fM)^G$ as well.
It is thus plain from Definition \ref{Def:3.2} that the arbitrary linear combination
$\{\ ,\ \}^\red = x \{\ ,\ \}_1^\red + y \{\ ,\ \}_2^\red$ also satisfies the Jacobi identity.
In this way, the compatibility of the two reduced Poisson brackets is
inherited from the compatibility of the original Poisson brackets.
\end{proof}

\begin{remark} \label{Rem:3.5}
It can be shown that the formulae of Theorem \ref{Th:3.4} give Poisson brackets
on $\Hol(\fM_0^\reg)^\cN$ and on $\Hol(\fM_0^\reg)^{G_0}$ as well.
Indeed, we can repeat the reduction starting from $\Hol(\fM^\reg)^G$ using the map
\eqref{F8}, and this leads to the reduced Poisson brackets on $\Hol(\fM_0^\reg)^\cN$.
Then the closure on $\Hol(\fM_0^\reg)^{G_0}$ follows from \eqref{F3} and the local nature of the Poisson brackets.
Because of \eqref{F3}, the quotient by $\cN$ can be taken in two steps,
\be
\fM_0^\reg/\cN = (\fM_0^\reg/G_0)/S_n.
\ee
Since the action of $S_n$ is free, the Poisson structure on  $\fM_0^\reg/\cN$, which
carries the functions $\Hol(\fM_0^\reg)^\cN$, lifts to a Poisson structure on $\fM_0^\reg/G_0$,
whose ring of functions is $\Hol(\fM_0^\reg)^{G_0}$.
\end{remark}

Now we turn to the reduction of the Hamiltonian vector fields \eqref{E16}
to vector fields on $\fM_0^\reg$.  There are two ways to proceed.
One may either directly associate vector fields to the reduced Hamiltonians using
the reduced Poisson brackets, or can suitably `project' the original Hamiltonian
vector fields. Of course, the two methods lead to the same result.

We apply the first method to the reduced Hamiltonians
$h_m:= H_m\circ \iota \in \Hol(\fM)_\red$, which are given by
\be
h_m(Q,L) = \frac{1}{m} \tr (L^m).
\label{F22}\ee
We have to find the vector fields $Y_m^i$ on $\fM_0^\reg$ that satisfy
\be
Y_m^i[f] = \{ f, h_m\}_i^\red,
\quad
\forall f\in \Hol(\fM)_\red,\quad i=1,2.
\label{F23}\ee
These vector fields are not unique, since one may add
any vector field to $Y_m^i$ that is tangent to the orbits of the residual
gauge transformations belonging to the group $G_0$.
This ambiguity does not effect the derivatives of the elements of $\Holm_\red$, and
 we may call any $Y_m^i$ satisfying \eqref{F23} the
\emph{reduced Hamiltonian vector field} associated with $h_m$
and the respective Poisson bracket.

Now a vector field $Y$ on $\fM_0^\reg$ is characterized by the corresponding derivatives
of the evaluation functions that map $\fM_0^\reg\ni (Q,L) $ to $Q$ and $L$, respectively.
 We denote these derivatives by $Y[Q]$ and $Y[L]$.
Then, for any $f\in \Hol(\fM_0^\reg)$,  the chain rule gives
\be
Y[f] = \langle \nabla_1 f, Q^{-1} Y[Q] \rangle + \langle d_2 f, Y[L]\rangle.
\label{F24}\ee

\begin{proposition} \label{Prop:3.6}
For all $m\in \N$, the reduced Hamiltonian vector fields
$Y_m^i$ \eqref{F23} can be specified by the formulae
\be
  Y_{m+1}^1[Q] = Y_m^2 [Q] = (L^m)_0 Q
  \quad \hbox{and}\quad
  Y_{m+1}^1[L] = Y^2_{m}[L] = [ \cR(Q) L^m, L].
 \label{F26} \ee
 \end{proposition}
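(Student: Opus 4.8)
The plan is to evaluate the two reduced Poisson brackets $\{f, h_{m+1}\}_1^\red$ and $\{f, h_m\}_2^\red$ explicitly, by feeding the derivatives of the reduced Hamiltonians into the formulae \eqref{red1} and \eqref{red2}, and then to match each result against the chain-rule identity \eqref{F24} in order to read off $Y[Q]$ and $Y[L]$. First I would record the derivatives of $h_m$ \eqref{F22}: since this function depends on $L$ alone, $\nabla_1 h_m = 0$ and $d_2 h_m = L^{m-1}$, whence $\nabla_2 h_m = \nabla_2' h_m = L^m$ by \eqref{E7}. These are simply the restrictions of the unreduced derivatives already found in \eqref{E18}.

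For the first bracket, substituting $d_2 h_{m+1} = L^m$ and $\nabla_1 h_{m+1} = 0$ into \eqref{red1} leaves $\langle \nabla_1 f, L^m\rangle + \langle L, [d_2 f, L^m]_{\cR(Q)}\rangle$. Expanding the $\cR(Q)$-bracket with \eqref{Rbrac} and using the ad-invariance $\langle X, [Y,Z]\rangle = \langle [X,Y], Z\rangle$ of the trace form, the term $\langle L, [\cR(Q)(d_2 f), L^m]\rangle$ collapses to $\langle \cR(Q)(d_2 f), [L^m, L]\rangle = 0$ since $L^m$ commutes with $L$, while the surviving term rearranges to $\langle d_2 f, [\cR(Q)L^m, L]\rangle$. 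For the second bracket, inserting $\nabla_2 h_m = \nabla_2' h_m = L^m$ and $\nabla_1 h_m = 0$ into \eqref{red2} yields $\langle \nabla_1 f, L^m\rangle + \langle \nabla_2 f - \nabla_2' f, \cR(Q) L^m\rangle$; since $\nabla_2 f - \nabla_2' f = [L, d_2 f]$ by \eqref{E7}, a further use of ad-invariance converts the second summand into $\langle d_2 f, [\cR(Q) L^m, L]\rangle$, identical to the first-bracket outcome.

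Comparing these expressions with \eqref{F24} then finishes the identification. Because $\nabla_1 f$ is $\cG_0$-valued, we have $\langle \nabla_1 f, L^m\rangle = \langle \nabla_1 f, (L^m)_0\rangle$, so $Q^{-1}Y[Q] = (L^m)_0$, i.e. $Y[Q] = (L^m)_0 Q$, while the coefficient of $d_2 f$ gives $Y[L] = [\cR(Q)L^m, L]$, as claimed for both $Y_{m+1}^1$ and $Y_m^2$.

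I expect the only delicate points to be the sign bookkeeping when moving $\cR(Q)$ across the trace form via its antisymmetry \eqref{asym}, and the observation that $[L^m, L] = 0$ is exactly what removes the $\cR(Q)(d_2 f)$ contribution, so that the two brackets produce the same $Y[L]$. I would also note that $Y_m^i$ is only determined up to a vector field tangent to the $G_0$-orbits; the constraint $[L, d_2 f]_0 = 0$ from Lemma \ref{Lem:3.3} guarantees that this freedom does not affect $Y[f]$ for $f \in \Holm_\red$, so the displayed formulae constitute a legitimate representative.
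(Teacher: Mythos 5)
Your proposal is correct and follows essentially the same route as the paper: insert $\nabla_1 h_m=0$, $d_2 h_{m+1}=\nabla_2 h_m=\nabla_2' h_m=L^m$ into \eqref{red1} and \eqref{red2}, use the invariance of the trace form (together with $[L^m,L]=0$ and $\nabla_2 f-\nabla_2' f=[L,d_2 f]$) to bring both brackets to the common form \eqref{F25}, and read off $Y[Q]$ and $Y[L]$ from \eqref{F24}. Your closing remarks on the $G_0$-gauge ambiguity and on $\langle\nabla_1 f, L^m\rangle=\langle\nabla_1 f,(L^m)_0\rangle$ are consistent with the paper's discussion surrounding the proposition.
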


 \begin{proof}
 It is enough to verify that any $f \in \Hol(\fM)_\red$ and $h_m$ \eqref{F22}, for $m\in \N$, satisfy
\be
\{ f, h_{m+1}\}^1_\red(Q,L) = \{ f, h_m\}_2^\red(Q,L) =
\langle \nabla_1 f(Q,L), (L^m)_0 \rangle + \langle d_2f(Q,L), [\cR(Q) L^m, L] \rangle.
\label{F25}\ee
To obtain this, note that
\be
d_2 h_{m+1}(Q,L) = \nabla_2 h_m(Q,L) = \nabla_2' h_m(Q,L) = L^m.
\ee
Because of \eqref{F14}, these relations of the derivatives reflect those that appeared in the proof of
Proposition \ref{Prop:2.3}.
Putting them into \eqref{red1} gives the claim for $\{f, h_{m+1}\}^1_\red$, since
\be
\langle L, [d_2f(Q,L), L^m]_{\cR(Q)} \rangle = \langle d_2f(Q,L), [\cR(Q)L^m, L]\rangle.
\ee
To get $\{f, h_m\}_2^\red$, we also use that $\nabla_2 f - \nabla_2' f = [L, d_2 f]$. Then the identity
\be
\langle \nabla_2f(Q,L) - \nabla'_2 f(Q,L), \cR(Q)L^m\rangle =
\langle [L, d_2f(Q,L)], \cR(Q)L^m \rangle =
\langle d_2f(Q,L), [\cR(Q)L^m, L]\rangle
\ee
implies \eqref{F25}.
\end{proof}

We conclude from Proposition \ref{Prop:3.6} that the evolutional vector
 fields on $\fM_0^\reg$ that underlie the equations \eqref{I1}
induce commuting bi-Hamiltonian derivations  of the commutative algebra of functions $\Hol(\fM)_\red$.
In this sense, the holomorphic spin Sutherland hierarchy \eqref{I1} possesses a bi-Hamiltonian structure.
It is worth noting that the same statement holds if we replace $\Hol(\fM)_\red$ by  either of the
two spaces of functions in the chain \eqref{F6}. According to \eqref{F8}, $\Hol(\fM_0^\reg)^\cN$
arises by considering the invariants
$\Hol(\fM^\reg)^G$ instead of $\Hol(\fM)^G$. However, it is the latter space that should be regarded
as the proper
algebra of functions on the quotient $\fM/G$ that inherits complete flows from the bi-Hamiltonian  hierarchy on $\fM$.
According to general principles \cite{OR}, the flows on the singular Poisson space $\fM/G$ are just the projections
 of the unreduced flows displayed explicitly in \eqref{E17}.

\section{Recovering the real forms}
\label{sec:R}

It is interesting to see how the bi-Hamiltonian structures of the real forms of the
system \eqref{I1},  described in \cite{FeNon,FeLMP},
 can be recovered from the complex holomorphic case.
First, let us consider the hyperbolic  real form which is obtained by taking $Q$ to be a real, positive matrix,
$Q=e^q$ with a real diagonal matrix $q$,
and $L$ to be a Hermitian matrix.   This means that we replace $\fM_0^\reg$ by the `real slice'
\be
\Re\fM_0^\reg:= \{ (Q,L)\in \fM_0^\reg \mid Q_i=e^{q_i},\, q_i\in \R,\,\, L^\dagger = L\}
\label{R1}\ee
and consider the \emph{real} functions belonging to $C^\infty(\Re\fM_0^\reg)^{\T^n}$, where $\T^n$ is the unitary subgroup of
$G_0$.
For such a function\footnote{We could also consider real-analytic functions.}, say $f$, we can take $\nabla_1 f$ to be a real diagonal matrix
and $d_2 f$ to be a Hermitian matrix.
In fact, in \cite{FeNon} we applied
\be
\langle X, Y \rangle_\R:= \Re \langle X, Y\rangle
\label{R2}\ee
and defined the derivatives  by
\be
\langle \delta q, \nabla_1 f\rangle_\R +
\langle \delta L, d_2 f\rangle_\R := \dt f(e^{t  \delta q}Q, L + t \delta L),
\label{R3}\ee
where $t\in \R$, $\delta q$ is an arbitrary real-diagonal matrix and $\delta L$ is an arbitrary Hermitian matrix.
Notice that the definitions entail
\be
\langle \delta q, \nabla_1 f\rangle_\R +
\langle \delta L, d_2 f\rangle_\R =
\langle \delta q, \nabla_1 f\rangle+
\langle \delta L, d_2 f\rangle,
\ee
and, with $\nabla_2 f \equiv L d_2 f$,
\be
\nabla'_2 f  \equiv (d_2 f) L = (L d_2 f)^\dagger = (\nabla_2 f)^\dagger.
\label{R4}\ee

\begin{proposition} \label{Prop:4.1}
If we consider $f, h \in C^\infty(\Re\fM_0^\reg)^{\T^n}$ with \eqref{R1} and insert their derivatives as defined above
into the right-hand sides of the formulae of Theorem \ref{Th:3.4}, then we obtain the following real Poisson brackets:
\be
\{ f,h\}^\Re_1(Q,L) = \langle  \nabla_1 f,  d_2 h \rangle_\R
- \langle  \nabla_1 h,  d_2 f\rangle_\R
+\langle L, [ d_2 f,  d_2 h]_{\cR(Q)}   \rangle_\R,
\label{R5}\ee
and
\be
\{ f,h\}_2^\Re(Q,L) = \langle  \nabla_1 f, \nabla_2 h\rangle_\R
- \langle  \nabla_1 h,  \nabla_2 f\rangle_\R
+ 2\langle   \nabla_2 f ,  \cR(Q)(  \nabla_2 h)\rangle_\R,
\label{R6}\ee
which reproduce the real bi-Hamiltonian structure given in Theorem 1 of \cite{FeNon}.
\end{proposition}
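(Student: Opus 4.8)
The plan is to verify by direct substitution that feeding the real-slice derivatives into the right-hand sides of \eqref{red1} and \eqref{red2} produces the real-valued expressions \eqref{R5} and \eqref{R6}; the identification with Theorem 1 of \cite{FeNon}, and with it the Poisson and compatibility properties, then follows by comparing formulae. The input data are the Hermiticity patterns recorded just before the statement: on $\Re\fM_0^\reg$ the matrix $\nabla_1 f$ is real diagonal, $d_2 f$ is Hermitian, and therefore $\nabla_2 f = L\, d_2 f$ satisfies $\nabla_2' f = (d_2 f)L = (\nabla_2 f)^\dagger$ as in \eqref{R4}. The consistency relation between $\langle\,,\rangle_\R$ and $\langle\,,\rangle$ noted in the preamble guarantees that these are exactly the matrices one substitutes into the formulae of Theorem \ref{Th:3.4}.

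The one structural fact I would isolate first is how $\cR(Q)$ interacts with Hermitian conjugation on the real slice. Writing $Q=e^q$ with $q$ real diagonal, the formula \eqref{Rcoth} shows that $\cR(Q)$ multiplies the $(k,l)$ off-diagonal entry by the real, antisymmetric factor $\tfrac12\coth\tfrac{q_k-q_l}{2}$. A one-line computation from this then yields the key identity
\be
(\cR(Q) X)^\dagger = -\,\cR(Q)(X^\dagger), \qquad \forall X\in\cG,
\ee
so that $\cR(Q)$ anticommutes with the $\dagger$-operation; in particular it sends Hermitian matrices to anti-Hermitian ones.

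For the first bracket I would show that every term of \eqref{red1} is already real on the real slice, so that $\langle\,,\rangle$ may be replaced by $\langle\,,\rangle_\R$ termwise. The pairing $\langle\nabla_1 f, d_2 h\rangle=\tr(\nabla_1 f\, d_2 h)$ is real because $\nabla_1 f$ is real diagonal and thus selects the real diagonal entries of the Hermitian $d_2 h$. For the last term, the identity above makes $\cR(Q)d_2 f$ and $\cR(Q)d_2 h$ anti-Hermitian, so each commutator entering $[d_2 f, d_2 h]_{\cR(Q)}$ — being a bracket of a Hermitian and an anti-Hermitian matrix — is Hermitian; hence $\langle L, [d_2 f, d_2 h]_{\cR(Q)}\rangle$ is the trace of a product of two Hermitian matrices and is real. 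This reproduces \eqref{R5} at once.

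The second bracket is where the bookkeeping matters, and I expect it to be the only delicate step. Using $\nabla_2' f=(\nabla_2 f)^\dagger$, $\nabla_2' h=(\nabla_2 h)^\dagger$ together with the Hermiticity of $\nabla_1 f$, the first two terms of \eqref{red2} collapse via $\langle\nabla_1 f,(\nabla_2 h)^\dagger\rangle=\overline{\langle\nabla_1 f,\nabla_2 h\rangle}$ into $\langle\nabla_1 f,\nabla_2 h\rangle_\R-\langle\nabla_1 h,\nabla_2 f\rangle_\R$. The crucial point is the pair of $\cR(Q)$-terms: combining the key identity with the antisymmetry \eqref{asym} gives $\overline{\langle\nabla_2 f,\cR(Q)\nabla_2 h\rangle}=-\langle\nabla_2' f,\cR(Q)\nabla_2' h\rangle$, whence $\langle\nabla_2 f,\cR(Q)\nabla_2 h\rangle-\langle\nabla_2' f,\cR(Q)\nabla_2' h\rangle=2\Re\langle\nabla_2 f,\cR(Q)\nabla_2 h\rangle=2\langle\nabla_2 f,\cR(Q)\nabla_2 h\rangle_\R$, which is exactly the term in \eqref{R6}. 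The subtlety is tracking the two sign contributions — one from the $\dagger$-anticommutation, one from \eqref{asym} — which conspire so that the two $\cR(Q)$-terms add into twice the real part rather than cancelling; once this is confirmed the identification with \cite{FeNon} is complete.
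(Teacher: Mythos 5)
Your proposal is correct and follows essentially the same route as the paper: the key identity $\cR(Q)(X^\dagger)=-(\cR(Q)X)^\dagger$ derived from \eqref{Rcoth}, termwise reality of \eqref{red1} on the real slice, and the conjugation argument that turns the two $\cR(Q)$-terms of \eqref{red2} into twice a real part. The only quibble is that your citation of the antisymmetry \eqref{asym} in the last step is superfluous (the identity $\overline{\langle\nabla_2 f,\cR(Q)\nabla_2 h\rangle}=-\langle\nabla_2' f,\cR(Q)\nabla_2' h\rangle$ needs only $\langle X,Y\rangle^*=\langle X^\dagger,Y^\dagger\rangle$ and the key identity), and for the final identification with \cite{FeNon} one should also record the notational dictionary ($d_2f$ versus $\nabla_2 f$, $q$ versus $2q$, and an overall factor of $2$), as the paper does.
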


\begin{proof}
The proof relies on the identity
\be
\cR(Q) (X^\dagger) = - (\cR(Q) X)^\dagger,
\quad
\forall X\in \cG.
\label{R8}\ee
This can be seen, for example, from the  formula \eqref{Rcoth}, since
\be
\ad_q X^\dagger = [q, X^\dagger] = - [q, X]^\dagger= - (\ad_q X)^\dagger, \qquad \forall X\in \cG,
\label{R9}\ee
because in the present case $q$ is a real diagonal matrix.
To deal with the first bracket, note that $\langle \nabla_1 f, d_2 h\rangle= \langle \nabla_1 f, d_2 h\rangle_\R$ as
both $\nabla_1 f$ and $d_2 h$ are Hermitian.
By using \eqref{R8} and the definition \eqref{Rbrac}, we see that $[ d_2 f, d_2 h]_{\cR(Q)}$ is Hermitian as well, and thus
\be
\langle L, [ d_2 f, d_2 h]_{\cR(Q)} \rangle =\langle L, [ d_2 f, d_2 h]_{\cR(Q)} \rangle_\R.
\ee
Consequently, we obtain the formula \eqref{R5} from \eqref{red1}

Turning to the second bracket, the equality $\nabla_2' h = (\nabla_2 h)^\dagger$  \eqref{R4} implies
\be
\frac{1}{2} \langle \nabla_1 f, \nabla_2 h + \nabla_2' h \rangle =
\frac{1}{2} \langle \nabla_1 f, \nabla_2 h \rangle  + \frac{1}{2} \langle (\nabla_1 f)^\dagger , (\nabla_2 h)^\dagger \rangle =
\langle \nabla_1 f, \nabla_2 h\rangle_\R,
\label{R7}\ee
simply because $\langle X, Y \rangle^* = \langle X^\dagger,  Y^\dagger \rangle$ holds for all $X,Y\in \cG$.
Thus the first line of \eqref{red2} correctly gives the first two terms of \eqref{R6}.
Moreover, on account of \eqref{R4} and \eqref{R8}, we obtain
\bea
&&\langle \nabla_2f, \cR(Q)( \nabla_2 h)\rangle  - \langle \nabla_2' f, \cR(Q)( \nabla_2' h)  \rangle \\
&& \quad =\langle \nabla_2f, \cR(Q)( \nabla_2 h)\rangle + \langle (\nabla_2f)^\dagger, (\cR(Q)( \nabla_2 h))^\dagger\rangle
= 2\langle \nabla_2f, \cR(Q)( \nabla_2 h)\rangle_\R.\nonumber
\eea
Therefore, \eqref{red2} gives \eqref{R6}.

Comparison with Theorem 1 in \cite{FeNon} shows
that the formulae \eqref{R5} and \eqref{R6} reproduce the real bi-Hamiltonian structure derived in that paper. We remark
that our $d_2 f$  \eqref{R3}  was denoted $\nabla_2 f$, and our variable $q$ corresponds to $2q$ in \cite{FeNon}.
Taking this into account, the Poisson brackets of Proposition \ref{Prop:4.1}, multiplied by an overall factor 2, give
precisely the Poisson brackets of \cite{FeNon}.
\end{proof}

The real form treated above yields the hyperbolic spin Sutherland model, and now we deal with the
trigonometric case. For this purpose, we introduce the alternative real slice
 \be
\Re'\fM_0^\reg:= \{ (Q,L)\in \fM_0^\reg \mid Q_j=e^{\ri q_j},\, q_j\in \R,\,\, L^\dagger = L\}
\label{R18}\ee
and consider the \emph{real} functions belonging to $C^\infty(\Re'\fM_0^\reg)^{\T^n}$.
A bi-Hamiltonian structure on this space of functions was derived in \cite{FeLMP}, where we used
the pairing
\be
\langle X, Y \rangle_\I:= \Im \langle X, Y\rangle
\label{R19}\ee
and defined the derivatives $D_1 f$, which is a real diagonal matrix, and $D_2 f$, which is an anti-Hermitian matrix, by
the requirement
\be
\langle \ri \delta q, D_1 f\rangle_\I +
\langle \delta L, D_2 f\rangle_\I := \dt f(e^{t \ri \delta q}Q, L + t \delta L),
\label{R20}\ee
where $t\in \R$,  $\delta q$ is  an arbitrary real-diagonal matrix and $\delta L$ is an arbitrary Hermitian matrix.
It is readily seen that
\be
\langle \ri \delta q, D_1 f\rangle_\I +
\langle \delta L, D_2 f\rangle_\I = \langle \ri \delta q, -\ri D_1 f\rangle +
\langle \delta L, -\ri D_2 f\rangle,
\ee
and comparison with \eqref{E5} motivates the definitions
\be
\nabla_1 f:= - \ri D_1 f,
\quad
d_2 f:= - \ri D_2 f.
\label{R21}\ee
This implies that $\nabla_2 f := L d_2 f$ and $\nabla_2' f := (d_2 f) L$ satisfy \eqref{R4} in this case as well.
An important difference is that instead of \eqref{R8} in the present case we have
\be
\cR(Q) X^\dagger = (\cR(Q) X)^\dagger,
\qquad \forall X\in \cG,
\label{R22}\ee
because  in \eqref{Rcoth} $q$ gets replaced by $\ri q$ with a real $q$, and then instead of \eqref{R9} we have
$\ad_{\ri q} X^\dagger = (\ad_{\ri q} X)^\dagger$.

\begin{proposition} \label{Prop:4.2}
If we consider $f, h \in C^\infty(\Re'\fM_0^\reg)^{\T^n}$ with \eqref{R18}  and insert their derivatives as defined in \eqref{R21}
into the right-hand sides of the formulae of Theorem \ref{Th:3.4}, then we obtain the following purely imaginary Poisson brackets:
\be
\{ f,h\}^\I_1(Q,L) = -\ri \bigl(\langle  D_1 f,  D_2 h \rangle_\I
- \langle  D_1 h,  D_2 f\rangle_\I
+\langle L, [ D_2 f,  D_2 h]_{\cR(Q)}   \rangle_\I \bigr),
\label{R23}\ee
and
\be
\{ f,h\}_2^\I(Q,L) = -\ri \bigl( \langle  D_1 f, L D_2 h\rangle_\I
- \langle  D_1 h,  L D_2 f\rangle_\I
+ 2\langle   L D_2 f ,  \cR(Q)( L  D_2 h)\rangle_\I \bigr).
\label{R24}\ee
Then $\ri \{ f,h\}_1^\I$ and $\ri \{f,h\}_2^\I$ reproduce
the real bi-Hamiltonian structure
given in Theorem 4.5 of \cite{FeLMP}.
\end{proposition}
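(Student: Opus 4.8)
The plan is to follow the template of the proof of Proposition \ref{Prop:4.1}, inserting the trigonometric conventions \eqref{R21} into the formulae \eqref{red1} and \eqref{red2} of Theorem \ref{Th:3.4} and carefully tracking the factors of $-\ri$ together with the real and imaginary parts of the pairings. First I would record the Hermiticity data implied by \eqref{R21}: since $D_2 f$ is anti-Hermitian, $d_2 f = -\ri D_2 f$ is Hermitian, just as in the hyperbolic case, so that $\nabla_2' f = (\nabla_2 f)^\dagger$ continues to hold by \eqref{R4}; by contrast, since $D_1 f$ is real diagonal, $\nabla_1 f = -\ri D_1 f$ is now \emph{anti-Hermitian}, which is the decisive structural difference from Proposition \ref{Prop:4.1}. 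The crucial algebraic input is the sign-flipped identity \eqref{R22}, namely $\cR(Q) X^\dagger = (\cR(Q)X)^\dagger$, replacing \eqref{R8}.

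For the first bracket I would substitute $\nabla_1 f = -\ri D_1 f$ and $d_2 f = -\ri D_2 f$ into \eqref{red1}. Every pairing then acquires an overall factor $(-\ri)^2 = -1$, and the key point is that the three surviving terms are each purely imaginary: $\langle D_1 f, D_2 h\rangle$ is imaginary because $D_1 f$ is real diagonal while $D_2 h$ has purely imaginary diagonal, and $[D_2 f, D_2 h]_{\cR(Q)}$ (recall \eqref{Rbrac}) is anti-Hermitian --- here one uses that \eqref{R22} makes $\cR(Q)$ preserve anti-Hermiticity and that a commutator of anti-Hermitian matrices remains anti-Hermitian --- so that its pairing with the Hermitian $L$ is imaginary as well. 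Rewriting each imaginary pairing $\langle\cdot,\cdot\rangle$ as $\ri\langle\cdot,\cdot\rangle_\I$ and collecting the factor $-1$ reproduces exactly \eqref{R23}.

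For the second bracket I would start from \eqref{red2}. Its first line is treated as in the hyperbolic proof, except that the anti-Hermiticity of $\nabla_1 f$ turns the symmetric combination $\frac{1}{2}(z+\bar z)=\Re z$ occurring in Proposition \ref{Prop:4.1} into the antisymmetric one $\frac{1}{2}(z-\bar z)=\ri\,\Im z$; together with the factor $-1$ this yields $-\ri\langle D_1 f, L D_2 h\rangle_\I$ and its partner. For the last two terms, the relation $\nabla_2' f = (\nabla_2 f)^\dagger$ combined with \eqref{R22} gives $\langle \nabla_2' f, \cR(Q)\nabla_2' h\rangle = \langle \nabla_2 f, \cR(Q)\nabla_2 h\rangle^*$, so their difference equals $z-\bar z = 2\ri\,\Im z$; once again the opposite sign in \eqref{R22} relative to \eqref{R8} converts the real part seen in the hyperbolic case into an imaginary part, and after the factor $-1$ one obtains $-2\ri\langle L D_2 f, \cR(Q)(L D_2 h)\rangle_\I$. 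Assembling these contributions produces \eqref{R24}.

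Finally, the concluding assertion is a dictionary check: I would compare $\ri\{f,h\}_1^\I$ and $\ri\{f,h\}_2^\I$ with the brackets of Theorem 4.5 in \cite{FeLMP}, matching the notation of that paper exactly as was done at the end of the proof of Proposition \ref{Prop:4.1}, to conclude that the two structures coincide. The only subtlety to watch throughout is the \emph{systematic sign reversal} induced by \eqref{R22}: at each stage where the hyperbolic computation extracted a real part, the trigonometric one extracts an imaginary part, and the bookkeeping must confirm that every term remains purely imaginary with precisely the stated coefficients. This is the main (and essentially the only) obstacle, as the required algebraic identities are already available.
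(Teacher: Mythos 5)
Your proposal is correct and follows essentially the same route as the paper: substitute the conventions \eqref{R21} into \eqref{red1}--\eqref{red2}, use the sign-flipped identity \eqref{R22} in place of \eqref{R8}, and track which pairings are purely imaginary (the paper only details the first bracket and declares the second analogous to the computation in Proposition \ref{Prop:4.1}, whereas you spell both out, but the method and all key observations coincide).
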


\begin{proof}
We detail only the first bracket, for which the first term of \eqref{red1} gives
\be
\langle \nabla_1 f, d_2 h\rangle = - \langle D_1 f, D_2 h\rangle = -\ri \langle D_1 f, D_2 h\rangle_\I,
\label{R25}\ee
since $\langle D_1 f, D_2 h\rangle$ is purely imaginary. The second term of \eqref{red2} is similar, and
the third term gives
\be
\langle L, [d_2 f, d_2 h]_{\cR(Q)} \rangle = - \langle L, [D_2 f, D_2 h]_{\cR(Q)} \rangle = - \ri \langle L, [D_2 f, D_2 h]_{\cR(Q)} \rangle_\I,
\label{R26}\ee
since  $[D_2 f, D_2 h]_{\cR(Q)}$ is anti-Hermitian.
To see this, we use \eqref{Rbrac} noting that $D_2 f$ and, by \eqref{R22}, $\cR(Q)(D_2f)$ are anti-Hermitian (and the same for $h$).
Collecting terms, the formula \eqref{R23} is obtained. The proof of \eqref{R24} is analogous to the calculation
presented in the proof of \eqref{R6}. The difference arises from the fact that now we have  \eqref{R22} instead of \eqref{R8}.
The last statement of the proposition is a matter of obvious comparison with the formulae of Theorem 4.5 of \cite{FeLMP}
(but one should note that what we here call $D_2 f$ was denoted $d_2 f$  in that paper, and $\langle\ ,\ \rangle_\I$ was
denoted $\langle\ ,\ \rangle$).
\end{proof}

\section{Conclusion}
\label{sec:C}

In this paper we developed a bi-Hamiltonian interpretation for the system of
holomorphic evolution equations \eqref{I1}.
The bi-Hamiltonian structure was found by interpreting
this hierarchy  as the Poisson reduction of a bi-Hamiltonian hierarchy on the
holomorphic cotangent bundle $T^*\GL$, described by Theorems \ref{Th:2.1}, \ref{Th:2.2+} and
Proposition \ref{Prop:2.3}.
Our main result is given by Theorem \ref{Th:3.4} together with
Proposition \ref{Prop:3.6}, which characterize the reduced bi-Hamiltonian hierarchy.
Then we reproduced our previous results on real forms
of the system \cite{FeNon,FeLMP}
by considering real slices of the holomorphic reduced phase space.

The first reduced Poisson structure and the associated interpretation
as a spin Sutherland model is well known, and it is also known that
the restrictions of the system to generic symplectic leaves of $T^*\GL/\GL$
are integrable in the degenerate sense \cite{Res1}.
 Experience with the real forms \cite{FeLMP} indicates that the second Poisson structure
 should be tied in with a relation of  the reduced system to spin
 Ruijsenaars--Schneider models, and degenerate integrability should
 also hold on the corresponding symplectic leaves.
 We plan to come back to this issue elsewhere.
 We remark in passing that although
  $T^*\GL/\GL$ is not a manifold, this does not cause any serious difficulty,
  since it still can be decomposed as a disjoint union of symplectic leaves.
  This follows from  general results on singular Hamiltonian reduction \cite{OR}.

 We finish by highlighting a few open problems for future work.
 First, it could be interesting to explore degenerate integrability directly
 on the  Poisson space $T^* \GL/\GL$, suitably adapting the formalism of
 the paper \cite{LMV}.
 Second, we wish to gain a better conceptual understanding of the process whereby
 one goes from holomorphic Poisson spaces and integrable systems to their real forms,
 and apply it to our case. The results of the recent study \cite{AF} should be
 relevant in this respect. Finally, it is a challenge to generalize our construction
 from the hyperbolic/trigonometric case to elliptic systems.
The existence of a bi-Hamiltonian structure for the elliptic spin Calogero--Moser system
 appears to follow from the existence of such a structure for an integrable
 elliptic top on $\GL$ \cite{KLO} via the symplectic  Hecke correspondence \cite{LOV,O}.

\bigskip
\bigskip
\begin{acknowledgements}
I wish to thank Maxime Fairon for several useful remarks on the manuscript.
I am also grateful to Mikhail Olshanetsky for drawing my attention to relevant references.
This work was supported by the NKFIH research grant K134946, and was also supported partially
by the Project GINOP-2.3.2-15-2016-00036 co-financed by the European Regional Development
Fund and the budget of Hungary.
\end{acknowledgements}

\appendix

\section{The origin of the second Poisson bracket on $G\times \cG$}

In this appendix we outline how the Poisson bracket $\{\ ,\ \}_2$ \eqref{E9} arises from the standard
Poisson bracket \cite{STS} on the Heisenberg double of the $\GL$ Poisson--Lie group.

We start with the complex Lie group  $G\times G$ and denote its elements as
pairs $(g_1, g_2)$.
We equip the corresponding Lie algebra $\cG \oplus \cG$ with the nondegenerate bilinear form
$\langle\ , \ \rangle_2$, given by
\be
\langle (X_1,X_2), (Y_1, Y_2) \rangle_2 :=
\langle X_1, Y_1 \rangle - \langle X_2, Y_2 \rangle
\label{G1}\ee
for all $(X_1,X_2)$ and $(Y_1,Y_2)$ from $\cG \oplus \cG$.
Then we have the  isotropic subalgebras,
\be
\cG^\delta := \{ (X,X)\mid X \in \cG\},
\label{G2}\ee
and
\be
\cG^*:= \{ (r_+(X), r_-(X))  \mid \forall X  \in \cG\}.
\label{G3}\ee
Recall that $r_\pm$ are defined in \eqref{rpm}, and  note that
$\cG \oplus \cG$ is the vector space direct sum of the disjunct subspaces $\cG^\delta$ and $\cG^*$;
$\cG^\delta$ is isomorphic to $\cG$, and $\cG^*$ can be regarded as its linear dual space.
We also introduce the corresponding subgroups of $G\times G$,
\be
G^\delta := \{ g_\delta  \mid
g_\delta:= (g,g),\, g \in G\},
\label{G4}\ee
and
\be
G^*= \left\{  g_* \mid g_*:=\left(g_> g_0, ( g_0 g_<)^{-1}\right),\,  g_> \in G_>,\, g_0\in G_0,\, g_< \in G_<\right\},
\label{G5}\ee
where $G_>$, $G_<$ and $G_0$ are the connected subgroups of $G$ associated with the Lie subalgebras
in the decomposition \eqref{RA}.
That is, $G_0$ contains the diagonal, invertible complex matrices, and $G_>$ (resp. $G_<$) consists of
the upper triangular (resp. lower triangular)
complex matrices whose diagonal entries are all equal to $1$.

In order to describe the pertinent  Poisson structures, we need the
 Lie algebra valued derivatives of holomorphic functions. For $\cF\in \Hol(G\times G)$,
we denote its $\cG \oplus \cG$-valued left- and right-derivatives, respectively, by $\cD\cF$ and $\cD' \cF$.
For example, we have
\be
\langle (X_1, X_2), \cD \cF(g_1,g_2)\rangle_2: = \dz \cF(e^{z X_1} g_1, e^{z X_2} g_2),
\label{G6}\ee
where $z\in \C$ and $(X_1,X_2)$ runs over $\cG \oplus \cG$.
Defined using $\langle\ ,\ \rangle_2$, a holomorphic function $\phi$ on $G^\delta$ has the $\cG^*$-valued
left- and right-derivatives, $D \phi$ and $D'\phi$.
Analogously, the left- and right-derivatives $D\chi$ and $D'\chi$ of $\chi\in \Hol(G^*)$ are
$\cG^\delta$-valued.

Now we recall \cite{STS} that $G\times G$ carries two natural Poisson brackets, which are given by
\be
\{\cF, \cH\}_\pm := \langle \cD \cF, R \cD \cH \rangle_ 2 \pm  \langle \cD' \cF, R \cD' \cH \rangle_ 2,
\label{PBpm}\ee
 where $R:= \frac{1}{2} \left( P_{\cG^\delta} - P_{\cG^*}\right)$ with the projections
 $P_{\cG^\delta}$ onto $\cG^\delta$ and $P_{\cG^*}$  onto $\cG^*$
 defined via the vector space direct sum $\cG \oplus \cG = \cG^\delta + \cG^*$.
 The minus bracket is called the Drinfeld double bracket, and the plus one the Heisenberg double bracket.
 The former makes $G\times G$ into a Poisson--Lie group, having the Poisson submanifolds $G^\delta$ and $G^*$,
 and the latter is symplectic in a neighbourhood of the identity.

 Let us consider an open neighbourhood of the identity in $G\times G$ whose elements can be factorized as
 \be
 (g_1, g_2) = g_{\delta L} g_{* R}^{-1} = g_{* L} g_{\delta R}^{-1}
 \label{G8}\ee
 with $g_{\delta L}, g_{\delta R}\in G^\delta$ and $g_{*L}, g_{*R}\in G^*$.
 Restricting $(g_1, g_2)$ as well as all constituents in the factorizations to be near enough
to the respective identity elements, the map
\be
(g_1, g_2) \mapsto (g_{\delta R}, g_{*R})
\label{G9} \ee
 yields a local, biholomorphic diffeomorphism.
As the first step towards deriving the bracket in \eqref{E9},
we use this diffeomorphism to transfer the plus Poisson bracket to a neighbourhood of the
identity of $G^\delta \times G^*$. The resulting Poisson structure then extends holomorphically
to the full of $G^\delta \times G^*$.
For $\cG, \cH \in \Hol(G^\delta \times G^*)$ we denote the resulting Poisson bracket
by $\{\cF,\cH\}_+'$.  One can verify that it takes  the following form:
  \bea
&&\{\cF, \cH\}_+'(g_\delta,g_*) =\left\langle g_* (D_2' \cF) g_*^{-1}, D_2\cH  \right\rangle_2
-\left\langle  g_\delta ( D'_1\cF) g_\delta^{-1},  D_1\cH \right\rangle_2
\nonumber\\
&&\qquad \qquad  +  \left\langle D_1\cF , D_2\cH \right\rangle_2
-\left\langle D_1 \cH , D_2\cF \right\rangle_2,
\label{+PB1}\eea
where the derivatives on the right-hand side
are   taken at $(g_\delta,g_*)\in G^\delta\times G^*$.
The subscript $1$ and $2$ refer to  derivatives with respect to the first and second arguments;
they are $\cG^*$ and $\cG^\delta$ valued, respectively.
For example, we have
\be
\langle D_1 \cF(g_\delta, g_*), (X,X)\rangle_2 = \dz \cF( (e^{zX}, e^{zX}) g_\delta, g_*)
\label{D1def}\ee
and
\be
\langle D_2 \cF(g_\delta, g_*), (r_+X,r_-X)\rangle_2 = \dz \cF(  g_\delta,(e^{zr_+X}, e^{zr_-X}) g_*).
\label{D2def}\ee
It is worth noting that
\be
(r_+X, r_-X) = (X_> + \frac{1}{2} X_0, - X_< - \frac{1}{2} X_0) \quad\hbox{for}\quad X= (X_> + X_0 + X_<) \in \cG.
\label{rpmX}\ee
The derivatives $D_1'$ and $D_2'$ are defined analogously, cf. \eqref{E5}.
The derivation of the formula \eqref{+PB1} from $\{\ ,\ \}_+$ in \eqref{PBpm} can follow closely the
proof of Proposition 2.1 in \cite{FeLMP},
where another Heisenberg double was treated.
The formula \eqref{+PB1} itself has the same structure as formula (2.18) in \cite{FeLMP},
and thus we  here omit  its derivation.

In the second step towards getting $\{\ ,\ \}_2$ in \eqref{E9}, we make use of a
biholomorphic diffeomorphism between open neighbourhoods
of the identity element of $G^\delta \times G^*$ and the element $(\1_n, \1_n) \in G\times \cG$.
For $g_\delta = (g,g)$ and $g_* = (g_> g_0, (g_0 g_<)^{-1} )$, this is given by the map
\be
(g_\delta, g_*) \mapsto (g, L)
\quad \hbox{with}\quad
L:= g_> g_0^2 g_<.
\label{G11}\ee
A (locally defined) function $\cF$ on $G^\delta \times G^*$   then corresponds to a (locally defined) function $F$
on $G\times \cG$ according to
\be
\cF(g_\delta, g_*) \equiv F(g,L).
\label{G12}\ee
To proceed further, we need an auxiliary result.

\begin{lemma} \label{Lm:A.1}
For the functions $\cF$ and $F$ in \eqref{G12},
 the derivatives $D_i \cF$
 and $D_i' \cF$ $(i=1,2)$ defined in \eqref{D1def}, \eqref{D2def} and the derivatives $\nabla_i F$, $\nabla_i' F$ defined in \eqref{E5}--\eqref{E7}
 are related as follows:
 \bea
&& D_1 \cF(g_\delta, g_*) = ( r_+ \nabla_1 F(g,L), r_- \nabla_1 F(g,L)), \label{G13} \\
&& D_1' \cF(g_\delta, g_*) = ( r_+ \nabla_1' F(g,L), r_- \nabla_1' F(g,L)),\label{G13*}
 \eea
 \be
D_2 \cF(g_\delta, g_*) = ( r_+ \nabla_2' F(g,L) - r_- \nabla_2 F(g,L),   r_+ \nabla_2' F(g,L) - r_- \nabla_2 F(g,L)),
\label{G14}\ee
and
\be
 P_{\cG^*}\left( g_* D_2' \cF(g_\delta, g_*) g_*^{-1} \right) = P_{\cG^*}\left((\nabla_2 F(g,L), \nabla_2'F(g,L))\right).
\label{G15}\ee
\end{lemma}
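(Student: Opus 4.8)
The plan is to compute each of the four derivatives directly from its defining relation \eqref{D1def}--\eqref{D2def} (together with the right-handed analogues), organizing everything around one structural observation: under the diffeomorphism \eqref{G11} one has $L = b_+ b_-^{-1}$, where I abbreviate $g_* = (b_+, b_-)$ with $b_+ = g_> g_0$ and $b_- = (g_0 g_<)^{-1}$. Indeed $b_+ b_-^{-1} = g_> g_0\cdot g_0 g_< = g_> g_0^2 g_< = L$, so $L$ is simply the first component of $g_*$ times the inverse of the second. The virtue of this rewriting is that it makes the response of $L$ to left- and right-translations of $g_*$ transparent, with no need to re-factorize the perturbed element. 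Throughout I would use the antisymmetry $\langle rX,Y\rangle = -\langle X, rY\rangle$, which yields the adjoint relations $r_+^* = -r_-$ and $r_-^* = -r_+$ for the trace form, together with $r_+ - r_- = \id$.

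For \eqref{G13} and \eqref{G13*} I would first note that left- or right-translating $g_\delta=(g,g)$ by $(e^{zX},e^{zX})$ changes only $g$ (to $e^{zX}g$, resp. $g e^{zX}$) and leaves $g_*$, hence $L = b_+ b_-^{-1}$, untouched. Thus $\dz\cF$ reproduces exactly the defining relation of $\nabla_1 F$ (resp. $\nabla_1' F$). On the other side, $D_1\cF\in\cG^*$ has the form $(r_+W, r_-W)$, and pairing it with $(X,X)$ collapses via $r_+ - r_- = \id$ to $\langle W, X\rangle$; matching the two expressions forces $W = \nabla_1 F$ (resp. $\nabla_1' F$), which is the claim.

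The substantive computation is for $D_2$ and $D_2'$. For $D_2$ I would left-translate $g_*$ by $(e^{zr_+X}, e^{zr_-X})$; since $r_+X = X_> + \half X_0$ is upper and $r_-X = -X_< - \half X_0$ is lower triangular (cf. \eqref{rpmX}), this element lies in $G^*$, the perturbed point stays in $G^*$, and $L = b_+ b_-^{-1}$ gives at once $L\mapsto e^{zr_+X}L\, e^{-zr_-X}$, whence $\dz L = (r_+X)L - L(r_-X)$. So $\dz\cF = \langle d_2 F, (r_+X)L - L(r_-X)\rangle$; moving $L$ across by cyclicity, using $\nabla_2 F = L\, d_2 F$ and $\nabla_2' F = (d_2 F)L$ and the adjoint relations for $r_\pm$, this becomes $\langle r_+\nabla_2' F - r_-\nabla_2 F,\, X\rangle$. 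Matching against $D_2\cF = (V,V)\in\cG^\delta$ paired with $(r_+X,r_-X)$, which again collapses to $\langle V,X\rangle$, yields \eqref{G14}. For \eqref{G15} I would instead right-translate $g_*$; then $L\mapsto b_+ e^{zr_+X}e^{-zr_-X}b_-^{-1} = b_+ e^{zX}b_-^{-1} + \mathrm{o}(z)$, so $\dz L = b_+ X b_-^{-1}$, and a short trace manipulation gives $D_2'\cF = (U,U)$ with $U = b_-^{-1}(d_2 F)b_+$. Conjugation then telescopes the triangular factors: $b_+ U b_+^{-1} = (b_+ b_-^{-1})(d_2 F) = L\, d_2 F = \nabla_2 F$ and $b_- U b_-^{-1} = (d_2 F)(b_+ b_-^{-1}) = (d_2 F)L = \nabla_2' F$, so in fact $g_*(D_2'\cF)g_*^{-1} = (\nabla_2 F, \nabla_2' F)$ on the nose; applying $P_{\cG^*}$ gives \eqref{G15}.

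The only real obstacle is bookkeeping rather than ideas: I must (i) confirm that the infinitesimal translations keep $g_*$ inside $G^*$, so that the derivatives are legitimately $\cG^*$- and $\cG^\delta$-valued; (ii) keep the adjoint signs $r_\pm^* = -r_\mp$ straight while transporting $r_\pm$ off $X$ and onto the $\nabla_2$-derivatives; and (iii) in \eqref{G15} recognize that conjugation by $g_*$ is precisely what cancels the stray $b_\pm$ produced by the right-derivative. None of these steps needs the explicit triangular factorization beyond the single formula $L = b_+ b_-^{-1}$, which is what keeps the whole computation short.
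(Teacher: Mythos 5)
Your argument is correct, and for \eqref{G13}, \eqref{G13*} and \eqref{G14} it coincides with the paper's proof: the same duality identity $\langle (r_+Y,r_-Y),(X,X)\rangle_2=\langle Y,X\rangle$ collapses the pairings, and the same observation $L=b_+b_-^{-1}$ turns the left-translation of $g_*$ into $L\mapsto e^{zr_+X}Le^{-zr_-X}$, after which the adjoint relations $r_\pm^*=-r_\mp$ give \eqref{G14} exactly as in the appendix. Where you genuinely deviate is \eqref{G15}. The paper argues dually: it pairs $P_{\cG^*}\bigl(g_*D_2'\cF\,g_*^{-1}\bigr)$ against an arbitrary $(V,V)\in\cG^\delta$, pushes the conjugation onto $(V,V)$, decomposes $g_*^{-1}(V,V)g_*$ along $\cG^*+\cG^\delta$, and identifies the result with $\langle dF(L),VL-LV\rangle$. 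You instead compute the $\cG^\delta$-valued derivative itself, $D_2'\cF=(U,U)$ with $U=b_-^{-1}(d_2F)b_+$, and then conjugate directly, which telescopes to $g_*D_2'\cF\,g_*^{-1}=(\nabla_2F,\nabla_2'F)$ \emph{on the nose}, i.e.\ before applying $P_{\cG^*}$. This is shorter (no need to introduce the components $K_\pm$ and the residual $\cG^\delta$-piece $(U,U)$ of $g_*^{-1}(V,V)g_*$) and yields a slightly stronger statement than \eqref{G15}, of which the projected identity used in \eqref{+PB1} is an immediate corollary; the paper's dual route has the mild advantage of never needing an explicit formula for $D_2'\cF$. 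Your bookkeeping points (i)--(iii) are exactly the places where care is needed, and they all check out.
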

\begin{proof}
We begin by pointing out the identity
\be
\langle (r_+Y, r_-Y), (X,X) \rangle_2 = \langle Y, X\rangle, \quad \forall X,Y\in \cG,
\label{P1}\ee
which is a consequence of \eqref{G1} and \eqref{rpmX}.
Now the definitions of the derivatives ensure that
\be
 \langle D_1 \cF(g_\delta, g_*), (X,X) \rangle_2 = \langle \nabla_1 F(g,L), X \rangle, \qquad \forall X \in \cG.
 \label{P2}\ee
Because of \eqref{P1} and the non-degeneracy of both pairings, this implies the identity \eqref{G13},
and   \eqref{G13*} results in the same manner.

To derive \eqref{G14}, we may forget the $g_\delta$-dependence and assume (just for simplicity of writing)
that $\cF$ depends only on $g_*$, which we now write as
\be
g_* = (g_+, g_-) \quad \hbox{with}\quad g_+ = g_> g_0,\,\, g_- = (g_0 g_<)^{-1},
\label{P3}\ee
referring to \eqref{G5}.
By setting $\hat X:= (r_+X, r_-X)$ and writing $D$ for $D_2$, we have
\be
\langle \hat X, D\cF(g_*) \rangle_2= \dz \cF(e^{ z r_+ X} g_+, e^{z r_- X} g_-) = \dz F (e^{z r_+X} L e^{-z r_- X})
\ee
since $L= g_+ g_-^{-1}$.
By simply expanding the exponential functions, this is equal  to
\bea
&&\langle  dF(L), (r_+X) L - L r_-X \rangle = \langle \nabla F(L), r_+X \rangle - \langle \nabla' F(L), r_- X \rangle
\nonumber\\
&&
= \langle r_+ \nabla' F(L) - r_- \nabla F(L), X \rangle
\nonumber \\
&& =\langle \left(r_+ \nabla' F(L) - r_- \nabla F(X),  r_+ \nabla' F(L) - r_- \nabla F(L)\right),  \hat X \rangle_2.
\eea
To get this, we used the definitions \eqref{E1}, \eqref{E7} together with
the anti-symmetry of $r$ \eqref{E3} with respect to the trace form, and the identity \eqref{P1}.
Thus we have shown that
\be \langle \hat X, D\cF(g_*) \rangle_2 = \langle \hat X,
 \left(r_+ \nabla' F(L) - r_- \nabla F(X),  r_+ \nabla' F(L) - r_- \nabla F(L)\right) \rangle_2
 \ee
 for arbitrary $\hat X \in \cG^*$ \eqref{G3}.
This implies \eqref{G14} since $\cG^\delta$ \eqref{G2} and $\cG^*$ \eqref{G3} are in duality with
respect to the non-degenerate pairing $\langle\ ,\ \rangle_2$.

In order to derive \eqref{G15}, we again assume that $\cF$ depends only on $g_* = (g_+, g_-)$.
Then we note that, for any $V\in \cG$,
\be
\langle P_{\cG^*}\left( g_* D' \cF( g_*) g_*^{-1} \right), (V,V)\rangle_2 = \langle g_* D' \cF( g_*) g_*^{-1} , (V,V)\rangle_2=
\langle D' \cF( g_*),  P_{\cG^*}  \left(g_*^{-1} (V,V) g_*\right)\rangle_2.
\ee
Of course, now  $D' \cF(g_*) \equiv D'_2 \cF(g_*) \in \cG_\delta$.
If we consider the decomposition
\be
g_*^{-1} (V,V) g_* = (K_+, K_-) + (U,U), \qquad
(K_+, K_-):=  P_{\cG^*}  \left(g_*^{-1} (V,V) g_*\right)
\ee
then
\be
K_+ - K_- = g_+^{-1} V g_+ - g_-^{-1} V g_-.
\ee
By using these we can write
\bea
&&\langle D' \cF( g_*),  P_{\cG^*}  \left(g_*^{-1} (V,V) g_*\right)\rangle_2 = \dz \cF (g_+ e^{K_+ z}, g_- e^{K_-z} )\nonumber\\
&& = \dz F(g_+ e^{K_+z} e^{- K_-z} g_-^{-1}) = \dz F(L + g_+( (K_+ - K_-) z + {\mathrm{o}}(z)) g_-^{-1})\nonumber\\
&& = \dz F(L+ (VL - LV)z + {\mathrm{o}}(z))  = \langle dF(L), VL - LV \rangle   \nonumber \\
&&= \langle (LdF(L),  (dF(L)) L), (V,V)\rangle_2.
\eea
 Hence we have shown that
 \be
\langle g_* D' \cF( g_*) g_*^{-1} , (V,V)\rangle_2 = \langle (\nabla F(L), \nabla' F(L)), (V,V)\rangle_2
\ee
which implies the claimed formula.
\end{proof}

We apply the local diffeomorphism \eqref{G11} to transfer the Poisson bracket $\{\ ,\ \}_+'$ \eqref{+PB1} to a Poisson bracket of holomorphic
functions defined locally on $G \times \cG$ (that is, on an open subset containing $(\1_n, \1_n) \in G\times \cG$).
The formula of the transferred Poisson bracket is obtained by substituting the identities of Lemma \ref{Lm:A.1}
into \eqref{+PB1} and then simply collecting terms.
The result turns out to have the form $\{\ ,\ \}_2$ given in \eqref{E9}, and  it naturally extends to a globally well defined Poisson bracket
 of holomorphic functions on $\fM = G \times \cG$.


\begin{thebibliography}{99}

\bibitem{ACL}
A.L. Andrew, K.-W. E. Chu and P. Lancaster,
{\it Derivatives of eigenvalues and eigenvectors of matrix functions},
SIAM J. Matrix  Anal. Appl. {\bf 14} (1993) 903-926

\bibitem{AAJ}
I. Aniceto, J. Avan and A. Jevicki,
{\it Poisson structures of Calogero--Moser and Ruijsenaars--Schneider models},
J. Phys. A {\bf 43} (2010) 185201;
\href{https://arxiv.org/abs/0912.3468}{\tt  arXiv:0912.3468 [hep-th]}

\bibitem{AF}
P. Arathoon and M. Fontaine,
{\it Real forms of holomorphic Hamiltonian systems};
\href{https://arxiv.org/abs/2009.10417}{\tt  arXiv:2009.10417 [math.SG]}

 \bibitem{A}
 G. Arutyunov,
Elements of Classical and Quantum Integrable Systems, Springer, 2019

\bibitem{BBT}
O. Babelon, D. Bernard, and M. Talon, Introduction to Classical Integrable Systems,
 Cambridge University Press,  2003.


\bibitem{BDF}
J. Balog, L. D\c{a}browski and L. Feh\'er,
{\it Classical $r$-matrix and exchange algebra in WZNW and Toda theories},
Phys. Lett. B {\bf 244} (1990) 227-234

\bibitem{Bar}
C. Bartocci, G. Falqui, I. Mencattini, G. Ortenzi and M. Pedroni, {\it On the geometric
origin of the bi-Hamiltonian structure of the
Calogero--Moser system}, Int. Math. Res. Not.  {\bf 2010} 279-296;
\href{https://arxiv.org/abs/0902.0953}{\tt arXiv:0902.0953 [math-ph]}

\bibitem{DeKV}
A. De Sole, V.G. Kac and  D. Valeri,
{\it Classical affine W-algebras and the associated integrable Hamiltonian hierarchies for classical Lie algebras},
 	Comm. Math. Phys. {\bf 360} (2018) 851-918; 
 \href{https://arxiv.org/abs/1705.10103}{\tt arXiv:1705.10103 [math-ph]}


 \bibitem{EV}
 P. Etingof and A. Varchenko,
 {\it  Geometry and classification of solutions of the classical dynamical Yang--Baxter equation},
Commun. Math. Phys. {\bf 192} (1998) 77-120;
 \href{https://arxiv.org/abs/q-alg/9703040}{\tt arXiv:q-alg/9703040}


\bibitem{FMP}
G. Falqui, F. Magri and M. Pedroni,
{\it Bihamiltonian geometry, Darboux coverings, and linearization of the KP hierarchy},
Commun. Math. Phys. {\bf 197} (1998) 303-324;  
\href{https://arxiv.org/abs/solv-int/9806002}{\tt arXiv:solv-int/9806002}


\bibitem{FaMe}
G. Falqui and I. Mencattini,
{\it Bi-Hamiltonian geometry and canonical spectral coordinates for
the rational Calogero--Moser system}, J. Geom. Phys. {\bf 118} (2017) 126-137;
\href{https://arxiv.org/abs/1511.06339}{\tt arXiv:1511.06339 [math-ph]}


\bibitem{FeNon}
L. Feh\'er,
{\it Bi-Hamiltonian structure of a dynamical system introduced by Braden and Hone},
Nonlinearity {\bf 32} (2019) 4377-4394;
\href{https://arxiv.org/abs/1901.03558}{\tt arXiv:1901.03558 [math-ph]}

\bibitem{FeLMP}
L. Feh\'er,
{\it Reduction of a bi-Hamiltonian hierarchy on $T^*{\mathrm{U}}(n)$ to spin Ruijsenaars--Sutherland models},
Lett. Math. Phys. {\bf 110} (2020) 1057-1079;  
\href{https://arxiv.org/abs/1908.02467}{\tt arXiv:1908.02467 [math-ph]}


\bibitem{KLO}
B. Khesin, A. Levin and M. Olshanetsky,
{\it Bihamiltonian structures and quadratic algebras in hydrodynamics and on non-commutative torus},
Commun. Math. Phys. {\bf 250} (2004) 581-612;
\href{https://arxiv.org/abs/nlin/0309017}{\tt arXiv:nlin/0309017 [nlin.SI]}

\bibitem{LMV}
C. Laurent-Gengoux, E. Miranda and P. Vanhaecke,
{\it Action-angle coordinates for integrable systems on Poisson manifolds},
Int. Math. Res. Not. {\bf  2011} 1839-1869;  	
\href{https://arxiv.org/abs/0805.1679}{\tt arXiv:0805.1679 [math.SG]}

\bibitem{LOV}
A.M. Levin, M.A. Olshanetsky and A. Zotov,
{\it Hitchin systems\,--\,symplectic Hecke correspondence and two-dimensional version},
Commun. Math. Phys. {\bf 236} (2003) 93–133; 
\href{https://arxiv.org/abs/nlin/0110045}{\tt  arXiv:nlin/0110045 [nlin.SI]}


\bibitem{LX}
L.-C. Li and P. Xu,
{\it A class of integrable spin Calogero--Moser systems},
Commun. Math. Phys. {\bf 231} (2002) 257-286;
\href{https://arxiv.org/abs/math/0105162}{\tt arXiv:math/0105162 [math.QA]}


\bibitem{M}
F. Magri,
{\it A simple model of the integrable Hamiltonian equation},
 J. Math. Phys. {\bf 19}  (1978) 1156-1162

\bibitem{O}
M. Olshanetsky,
{\it Classical integrable systems and gauge field theories},
Phys. Part. Nucl. {\bf 40} (2009) 93–114;
\href{https://arxiv.org/abs/0802.3857}{\tt arXiv:0802.3857  [hep-th]}

\bibitem{OR}
J.-P. Ortega and T. Ratiu,
Momentum Maps and Hamiltonian Reduction, Birk\"auser, 2004



\bibitem{Res1}
N. Reshetikhin,
{\it Degenerate integrability of spin Calogero--Moser systems and the
duality with the spin Ruijsenaars systems},
Lett. Math. Phys. {\bf 63} (2003) 55-71;
\href{https://arxiv.org/abs/math/0202245}{\tt arXiv:math/0202245 [math.QA]}


\bibitem{RBanff}
S.N.M. Ruijsenaars,
{\it Systems of Calogero--Moser type}, pp. 251-352 in: Proceedings of the
1994 CRM-Banff Summer School: Particles and Fields, Springer, 1999

\bibitem{STS}
  M.A. Semenov-Tian-Shansky,
{\it Dressing transformations and Poisson group actions},
Publ. RIMS {\bf 21} (1985) 1237-1260


\bibitem{Sm}
R.G. Smirnov,
{\it Bi-Hamiltonian formalism: A constructive approach},
Lett. Math. Phys. {\bf 41} (1997)  333-347

\bibitem{SurB}
Yu. B. Suris,
The Problem of Integrable Discretization:
Hamiltonian Approach,
Birkh\"auser, 2003

\end{thebibliography}
\end{document}